\DeclareMathOperator{\Tr}{Tr}
\newmdenv[leftmargin=\dimexpr-0.4em, innerleftmargin=0.5em,
rightmargin=\dimexpr-0.4em, innerrightmargin=0.5em,
linewidth=2pt,linecolor=red, topline=false, bottomline=false,
innertopmargin=0pt,innerbottommargin=0pt,skipbelow=0pt,skipabove=0pt,%
]{notex}
\newenvironment{note}%
{\vskip\dimexpr\dp\strutbox-\prevdepth\relax\notex\strut\ignorespaces}%
{\xdef\notetpd{\the\prevdepth}\endnotex\vskip-\notetpd\relax}
\let\oldtodo\todo
\DeclareDocumentCommand{\todo}{ O{} +g +d<> }{%
		\setlength{\marginparwidth}{1.5cm}%
	\IfNoValueTF{#2}{\relax}{%
		\oldtodo[caption={#2},size=\scriptsize,#1]{\renewcommand{\baselinestretch}{0.8}\selectfont\sffamily#2\par}%
	}%
	\IfNoValueTF{#3}{\relax}{%
		\IfNoValueTF{#2}{
			\begin{note}%
				\begin{internallinenumbers}%
					\indent%
					#3%
				\end{internallinenumbers}%
			\end{note}%
		}{
			\vspace{-0\baselineskip}%
			\begin{note}%
				\begin{internallinenumbers}%
					\indent%
					#3%
				\end{internallinenumbers}%
			\end{note}%
		}%
	}%
}%
\newcommand{\hlc}[2][yellow]{{%
		\colorlet{foo}{#1}%
		\sethlcolor{foo}\hl{#2}}%
}
\newcommand{\removetodo}[2]{\todo[color=pink]{\textbf{delete:} ``#1'' #2}\hlc[pink]{#1}}
\newcommand{\inserttodo}[1]{\todo[color=green!40]{\textbf{insert:} #1}}
\newcommand{\hltodoy}[2]{\todo[color=yellow!40]{#2}\hl{#1} }
\newcommand{\hltodoc}[3]{\todo[color=#3!40]{#2}\hlc[#3]{#1} }
\newcommand{\hltodo}[2]{\todo[color=orange!40]{#2}\hlc[orange!40]{#1} }
\newcommand{\replacetodo}[2]{\todo[color=pink!40]{\textbf{replace with:}``#2'' }\hl{#1} }
\newcommand{\todol}[1]{{%
		\let\marginpar\marginnote
		\reversemarginpar
		\renewcommand{\baselinestretch}{0.8}%
		\todo{#1}}}
\newcommand{\inserttodol}[1]{{%
		\let\marginpar\marginnote
		\reversemarginpar
		\renewcommand{\baselinestretch}{0.8}%
		\inserttodo{#1}}}
\newcommand{\removetodol}[2]{{%
		\let\marginpar\marginnote
		\reversemarginpar
		\renewcommand{\baselinestretch}{0.8}%
		\removetodo{#1}{#2}}}
\newcommand{\hltodol}[2]{{%
		\let\marginpar\marginnote
		\reversemarginpar
		\renewcommand{\baselinestretch}{0.8}%
		\hltodo{#1}{#2}}}
\newcommand{\replacetodol}[2]{{%
		\let\marginpar\marginnote
		\reversemarginpar
		\renewcommand{\baselinestretch}{0.8}%
		\replacetodo{#1}{#2}}}
\newcommand{\hltodoyl}[2]{{%
		\let\marginpar\marginnote
		\reversemarginpar
		\renewcommand{\baselinestretch}{0.8}%
		\hltodoy{#1}{#2}}}
\newcommand{\hltodocl}[3]{{		\let\marginpar\marginnote
		\reversemarginpar
		\renewcommand{\baselinestretch}{0.8}%
		\hltodoc{#1}{#2}{#3}}}
\newtheorem{theorem}{Theorem}[section]
\newtheorem{lemma}[theorem]{Lemma}
\newtheorem{proposition}[theorem]{Proposition}
\newtheorem{corollary}[theorem]{Corollary}
\newtheorem{definition}{Definition}[section]
\newcommand{\comb}[2]{\prescript{#1}{}C_{#2}}
\def\bkE{{\rm I\kern-.17em E}}
\def\bk1{{\rm 1\kern-.17em l}}
\def\bkD{{\rm I\kern-.17em D}}
\def\bkR{{\rm I\kern-.17em R}}
\def\bkP{{\rm I\kern-.17em P}}
\def\bkZ{{\bf{Z}}}
\def\bkE{{\rm I\kern-.17em E}}
\def\bk1{{\rm 1\kern-.17em l}}
\def\bkD{{\rm I\kern-.17em D}}
\def\bkR{{\rm I\kern-.17em R}}
\def\bkP{{\rm I\kern-.17em P}}
\newcommand{\pushright}[1]{\ifmeasuring@#1\else\omit\hfill$\displaystyle#1$\fi\ignorespaces}
\newcommand{\pushleft}[1]{\ifmeasuring@#1\else\omit$\displaystyle#1$\hfill\fi\ignorespaces}
\def\bkZ{{\bf{Z}}}
\def\b12{(\beta_1,\beta_2)}
\newenvironment{example}{{\noindent \bf Example}}{\hfill $\square$\hspace{-4.5pt}\vspace{6pt}}
\newcounter{example}
\renewcommand{\theexample}{\thesection.\arabic{example}}
\newcounter{remark}
\renewcommand{\theremark}{\thesection.\arabic{remark}}
\def\t{^\top}
\def\Bscr{\mathscr{B}}
\def\Ebb{\mathbb{E}}
\newlength{\noteWidth}
\long\def\notes#1{\ifinner
{\tiny #1}
\else
\marginpar{\parbox[t]{\noteWidth}{\raggedright\tiny #1}}
\fi\typeout{#1}}
 \def\notes#1{\typeout{read notes: #1}} 
\newcommand{\ie}{i.e.\@\xspace} 
\newcommand{\eg}{e.g.\@\xspace} 
\newcommand{\e}[2]{{\small e}$\scriptstyle#1$#2}
\newcommand{\Real}{\ensuremath{\mathbb{R}}}
\def\Ebb{\mathbb{E}}
\def\Pbb{{\mathbb{P}}}
\def\Ibf{{\bf I}}
\def\half  {{\textstyle{1\over 2}}}
\def\conv{\textrm{conv}\:}
\def\spose#1{\hbox to 0pt{#1\hss}}
\def\text #1{\hbox{\quad#1\quad}}
\def\nthinsp{\mskip -2   mu}
\def\superstar{^{\raise 0.5pt\hbox{$\nthinsp *$}}}
\def\SUPERSTAR{^{\raise 0.5pt\hbox{$*$}}}
\def\lamstarT {\lambda^{\raise 0.5pt\hbox{$\nthinsp *$}T}}
\def\Bscr{{\cal B}}
\def\Dscr{{\cal D}}
\def\Hscr{{\cal H}}
\def\Lscr{{\cal L}}
\def\Pscr{{\cal P}}
\def\Qscr{{\cal Q}}
\def\Sscr{{\cal S}}
\def\Uscr{{\cal U}}
\def\Nscr{{\cal N}}
\def\Gscr{{\cal G}}
\def\Cscr{{\cal C}}
\def\aur{\;\textrm{and}\;}
\def\non{\nonumber}
\def\bfI{{\bf I}}
\let\forallnew\forall
\renewcommand{\forall}{\forallnew\ }
\let\forall\forallnew
		\def\bkE{{\rm I\kern-.17em E}}
		\def\bk1{{\rm 1\kern-.17em l}}
		\def\bkD{{\rm I\kern-.17em D}}
		\def\bkR{{\rm I\kern-.17em R}}
		\def\bkP{{\rm I\kern-.17em P}}
		\def\bkY{{\bf \kern-.17em Y}}
		\def\bkZ{{\bf \kern-.17em Z}}
		\def\bkC{{\bf  \kern-.17em C}}
\def\Tsf{\mathsf{T}}
\def\Esf{\mathsf{E}}
\def\Rsf{\mathsf{R}}
		\def\bsp{\begin{split}}
		\def\beq{\begin{eqnarray}}
		\def\bal{\begin{align*}}
		\def\bc{\begin{center}}
		\def\be{\begin{enumerate}}
		\def\bi{\begin{itemize}}
		\def\bs{\begin{small}}
		\def\bS{\begin{slide}}
		\def\ec{\end{center}}
		\def\ee{\end{enumerate}}
		\def\ei{\end{itemize}}
		\def\es{\end{small}}
		\def\eS{\end{slide}}
		\def\eeq{\end{eqnarray}}
		\def\eal{\end{align*}}
		\def\esp{\end{split}}
		\def\qed{ \vrule height7.5pt width7.5pt depth0pt}  
	\def\cp2problem#1#2#3#4{\fbox
		 {\begin{tabular*}{0.9\textwidth}
			{@{}l@{\extracolsep{\fill}}l@{\extracolsep{6pt}}l@{\extracolsep{\fill}}c@{}}
				#1 & & $#4 $ 
			\end{tabular*}}}
		\def\bkE{{\rm I\kern-.17em E}}
		\def\bk1{{\rm 1\kern-.17em l}}
		\def\bkD{{\rm I\kern-.17em D}}
		\def\bkR{{\rm I\kern-.17em R}}
		\def\bkP{{\rm I\kern-.17em P}}
		\def\bkZ{{\bf{Z}}}
\newcommand {\beeq}[1]{\begin{equation}\label{#1}}
\newcommand {\eeeq}{\end{equation}}
\newcommand {\bea}{\begin{eqnarray}}
\newcommand {\eea}{\end{eqnarray}}
\def\texitem#1{\par\smallskip\noindent\hangindent 25pt
               \hbox to 25pt {\hss #1 ~}\ignorespaces}
\def\bsp{\begin{split}}
		\def\beq{\begin{eqnarray}}
		\def\bal{\begin{align*}}
		\def\bc{\begin{center}}
		\def\be{\begin{enumerate}}
		\def\bi{\begin{itemize}}
		\def\bs{\begin{small}}
		\def\bS{\begin{slide}}
		\def\ec{\end{center}}
		\def\ee{\end{enumerate}}
		\def\ei{\end{itemize}}
		\def\es{\end{small}}
		\def\eS{\end{slide}}
		\def\eeq{\end{eqnarray}}
		\def\eal{\end{align*}}
		\def\esp{\end{split}}
		\def\qed{ \vrule height7.5pt width7.5pt depth0pt}  
\def\Cscr{{\cal C}}
\def\Nscr{{\cal N}}
\def\conv{{\rm conv}}
\title{\bf The Quantum Advantage in Binary Teams and the Coordination Dilemma:  Part I}
\author[1]{Shashank A. Deshpande}
 \author[1]{Ankur A. Kulkarni} 
\affil[1]{Systems and Control Engineering,  Indian Institute of Technology Bombay, Mumbai 400076 \\ Emails: \texttt{shashank.deshpande.phy@iitb.ac.in, kulkarni.ankur@iitb.ac.in}}
\date{}
\begin{document}
\maketitle

\begin{abstract}
We have shown that entanglement assisted stochastic strategies allow access to strategic measures beyond the classically correlated measures accessible through passive common randomness, and thus attain a quantum advantage in decentralised control. In this two part series of articles, we investigate the decision theoretic origins of the quantum advantage within a broad superstructure of problem classes.  Each class in our binary team superstructure corresponds to a parametric family of cost functions with a distinct algebraic structure. In this part, identify the only problem classes that benefit from quantum strategies. We find that these cost structures admit a special decision-theoretic feature -- `\textit{ the coordination dilemma}'. Our analysis hence reveals some intuition towards the utility of non-local quantum correlations in decentralised control.
\end{abstract}

\section{Introduction}
Strategies correlated through passive common randomness constitute a well-known space of classically implementable strategies in decentralised control. However, it is known, thanks to a counter-example by Ananthram and Borkar~\cite{ananthram2007commonrandom} that this constitution does not exhaust the space of occupation measures allowed by the information structure of the problem. We have recently shown that this `limitation' of common randomness in decentralised control can be alleviated with the use of a quantum mechanical architecture~\cite{deshpande2022quantum} to generate randomness. Specifically, we considered a decentralised estimation problem and demonstrated a new class of entanglement assisted stochastic strategies that, while still respecting the information structure, produce a cost improvement over what is achievable through common randomness -- a phenomenon we called the \textit{quantum advantage} in decentralised control.

However, we also found, numerically, that the quantum advantage varies with our problem parameters and problem structure. It appears prima facie that the structure of the cost function, and the constants involved in it critically determine the manifestation of the quantum advantage. Moreover, the quantum advantage appears and disappears as we vary the fidelity of the observation channels of the decision makers. In this two part series of articles, our goal is to shed light on these occurrences from a decision-theoretic vantage point. Our aim is to delineate  decision theoretic features of the problem that explain or characterize these  observations.

In this paper,  we  define a superstructure of binary static team problems, and  situate the decentralised estimation problem  from~\cite{deshpande2022quantum} within  this superstructure. The problem from~\cite{deshpande2022quantum} demanded that the players coordinate on nonoverlapping subsets of actions for each value of the environmental uncertainty, as part of estimation effort. Since the environmental uncertainty is observed only partially, such a cost function imposes a \textit{coordination dilemma} on the decision makers. It is apparent from~\cite{deshpande2022quantum} that richer forms of correlations afforded by quantum entanglement are instrumental in extracting a quantum advantage in the face of this dilemma.
In this paper, we ask the converse question:  is it the case that problems that do not have such a dilemma, also do not benefit from quantum strategies? We find that the answer is \textit{yes} -- problems where the coordination dilemma exists are the \textit{only} class admitting a quantum advantage within our problem class superstructure. In all other classes, quantum strategies perform just as well as classical ones, \textit{regardless} of the parameter values. Thus decision-theoretically the coordination dilemma is in some sense fundamental to the appearance of a quantum advantage. This is the main contribution of the present paper. We find it to be a point of great caution that not every problem class admits a quantum advantage. In the second part, we investigate parametric values within this problem class that allow for the quantum advantage, thereby further constraining the instances where quantum strategies are useful. These parametric values capture the intensity of the coordination dilemma and the quality of information of the agents. More details about this are discussed in the second part.

Our quantum strategies require that the decision makers share a pair of  entangled particles; creation of such particles and protecting their state from decoherence is part of an ongoing technological effort. Some highly successful implementations exist, \eg, quantum key distribution~\cite{yin2020qcryp} has been realised by several distribution networks like the DARPA, SECOQC, SwissQuantum and Tokyo QKD (see also~\cite{gisin2007qcomm} and  \cite{brunner2014bell}). These successes notwithstanding, quantum strategies enabled through entanglement are, as of today, an expensive and fragile resource.
Our results give a sharp decision-theoretic boundary to ascertaining when such a resource is worth investing in.  Though we work with the specific class of binary problems, there are hints in our  calculations that our results can be used as ingredients in a larger structural investigation of more general problems. This more general analysis is part of our ongoing research.

Non-classical correlations arising from entanglement and their implications for the nature of physical reality have been a subject of the greatest of scientific debates, starting with Einstein~\cite{einstein1935epr} and Bohr~\cite{bohr1935epr}, later to Bell~\cite{bell1964epr} and more recently CHSH~\cite{clauser1969chsh} and Aspect~\cite{aspect1982violation}. The Nobel prize in physics in 2022 was awarded for the experimental confirmation of the existence of these very correlations. For stochastic control, the feature is the \textit{geometry} of nonclassical correlations (which reduce to occupation measures in stochastic control). The set of classically attainable distributions satisfy what are known as \textit{Bell inequalities}; these can be expressed as faces of the polytope formed by classical distributions. A violation of the Bell inequalities by a physical experiment implies these inequalities form a separating hyperplane between the distribution attained by the experiment and \textit{all} classically attainable distributions. A cost function can be loosely thought of as a `normal' to this hyperplane, which if aligned appropriately, exhibits a quantum advantage. However, this intuition is loose due to two reasons: first, it is not only the cost function but also the probability distribution of the observations and the environmental uncertainty that appears in the cost, and second, the environmental uncertainty is not observed by the players. More importantly, a geometric picture such as this does not allow for much understanding on the underlying decision-theoretic dilemmas that are at play. Our finding relating the coordination dilemma to the quantum advantage is thus also a novel, decision-theoretic insight into the powers of quantum correlations. Moreover, the finding is rather strong -- in the absence of the coordination dilemma, the quantum advantage does not manifest for any values of the parameter, \ie, the parameters in the cost or the probability distribution.

This article is organised as follows. In section \ref{sec:superstructure} we elaborate the coordination dilemma and develop the problem class superstructure that rest of the article investigates upon. In section \ref{sec:nonlocal} we briefly introduce different strategic classes and define the non-local advantages with respect to our superstructure. We also offer here, a fresh control theoretic perspective on some well known attributes of correlations in quantum information theory. Section \ref{sec:equivalences} investigates some invariances that our class superstructure enjoys, that compress our elimination procedure in Section \ref{sec:eliminate} to a few `generating' problem classes. Ultimately in Section \ref{sec:eliminate} we exhaustively scan through our superstructure and isolate problem classes that admit the quantum advantage.

\section{The Problem Class Superstructure and the Coordination Dilemma}\label{sec:superstructure}
\subsection{Notation}
We use $\Pscr(S)$ to denote the set of probability distributions on the set $S$. Similarly, $\Pscr(S|T)$ denotes the the conditional probability distribution on $S$ given an element in $T$. We denote by $\Bscr(\Hscr)$, the set of all complex bounded linear operators on a Hilbert space $\Hscr$. We employ the following notation for operations among boolean variables $a,b\in\{0, 1\}$.
$a\cdot b$ denotes the logical AND, $a\vee b$ denotes the logical OR, $a\oplus b$ denotes the logical XOR, $\sim a$ denotes the negation of $a$.  We denote the conjugate transpose of an operator $\rho\in\Bscr(\Hscr)$ by $\rho^\dagger$. We use $\Tr(\rho)$ to denote its trace. We denote $\mathbf{I}$ to be the identity matrix (operator); its ambient dimension would be clear from the context. Let $\Rsf $ denote the permutation matrix given by \begin{equation}\label{eq:rdef}
	\Rsf:=\begin{pmatrix}
		0&1\\
		1&0
	\end{pmatrix}
\end{equation}

\subsection{Team decision problems}
We consider decentralised decision problems with static information structure with two agents (or decision makers or players) $A$ and $B$. The state of nature is described as a tuple $(\xi_A, \xi_B, \xi_W)$ of correlated binary random variables with a known distribution $ \Pbb $ where $\xi_A \in \Xi_A$, $\xi_B \in \Xi_B$, $ \xi_W \in \Xi_W$  and $\Xi_A$ $=\Xi_B$ $=\Xi_W$ $=\{0, 1\}$.
Players $ A $ and $ B $ observe $ \xi_A $ and $ \xi_B $ respectively and must choose actions $ u_A  \in \Uscr_A$, and $ u_B \in \Uscr_B, $ respectively as a function of their observations.
Action spaces of $A$ and $B$ are sets $\mathcal{U}_A:=\{u_A^{0}, u_A^1\}$ and $\mathcal{U}_B:=\{u_B^0, u_B^1\}$ respectively; note that $ u_A $, $ u_B $ (without the superscript) denote generic elements of $ \Uscr_A,\Uscr_B $, respectively. Occasionally we will need to order the elements of $\Uscr_A$ and $\Uscr_B$, in which case it will be convenient to think of $\Uscr_A,\Uscr_B$ as two dimensional vectors (say in $\Real^2$) with distinct components each.

Based on their actions $u_A\in \Uscr_A$, $u_B \in \Uscr_B $ and the value of $\xi_W$, the decision makers incur a cost $\ell(u_A, u_B, \xi_W)$. The goal of the players is to  minimize $ \Ebb[\ell(u_A,u_B,\xi_W)] $, where the expectation is taken with respect to $ u_A,u_B $, $ \xi_A,\xi_B $ and $ \xi_W $, via a suitable choice of a policy. A policy is conditional joint distribution of $ u_A,u_B $ given $ \xi_A,\xi_B $,
$ Q(\cdot|\cdot) \in \Pscr(\Uscr|\Xi) $,
where $ \Uscr:= \Uscr_A \times \Uscr_B $ and $ \Xi := \Xi_A \times \Xi_B $. In addition to belonging to $ \Pscr(\Uscr|\Xi) $, a policy must also satisfy some constraints, the nature of which form a central topic in this paper; we defer this discussion to the next section.

\subsection{Coordination dilemma}
The cost function $ \ell $ depends on $ \xi_{W} $ which is not observed by either player, although players do observe $ \xi_{W} $ partially through $ \xi_A,\xi_B $.  The lack of knowledge of $ \xi_W $ is a source of a significant dilemma for the players. As an illustration consider the following cost function $ \ell(u_A,u_B,\xi_W) $  from~\cite{deshpande2022quantum}.
\begin{equation}
    \begin{array}{|c|c|c|}
    \hline
         \xi_W=0 & u_B^0 & u_B^1  \\
        \hline
          u_A^0& -1 & 0\\
          u_A^1 & 0 & -1 \\
            \hline
    \end{array}\
    \begin{array}{|c|c|c|}
         \hline
       \xi_W=1 &  u_B^0  & u_B^1  \\
        \hline
          u_A^0& 0 & -3/4 \\
          u_A^1& -3/4 & 0 \\
            \hline
    \end{array}
    \label{eq:egCACinstance}
\end{equation}
When $ \xi_W=0,$ it is beneficial for the players to concentrate the mass of $ Q(\cdot|\cdot) $ on the subset  $\{(u_A^0,u_B^0), (u_A^1,u_B^1)\}$, whereas when $ \xi_W=1 $, it is beneficial to do so on the complement $ \{(u_A^1,u_B^0),(u_A^0,u_B^1)\}.$ We refer to this situation as the \textit{coordination dilemma}. The dilemma is about whether they should match the index of their actions, \ie `coordinate', or mismatch these indices, or  `anti-coordinate'.

In a hypothetical centralized setting where both players had access to both $ \xi_A $ and $ \xi_B $, players could agree on the `best' estimate of $ \xi_W $ and choose to either coordinate or anti-coordinate. But the decentralized nature of the problem implies that players could have differing views about the value of $ \xi_W $, thereby leading to the above dilemma.

It is intuitive that if players could correlate their actions such a way that reflects an optimal midway compromise between coordination and anti-coordination, they could potentially achieve a better cost on average than either coordination or anti-coordination.
Unfortunately, with classical strategies, possibilities for creating such correlation is limited. In fact, even with the optimal classical strategies, players cannot do better than they could \textit{without} randomization. However, remarkably, we showed in \cite{deshpande2022quantum} that players can do better through \textit{quantum randomization} obtained by correlating their actions through \textit{entanglement}. For the above problem, we demonstrated a physically realizable quantum strategy that strictly outperforms all classical strategies, thereby showing the existence of a quantum advantage in decentralized control.

\subsection{Problem classes}
A decision problem in our setting is  specified by the prior distribution $\Pbb(\xi_A, \xi_B, \xi_W)$ on the states of nature, the action spaces $\mathcal{U}_A, \mathcal{U}_B$ and the cost function $\ell$.
We assume that  $\ell$ satisfies
\begin{equation}\label{eq:supstruct}
\ell(u_A, u_B, \xi_W=0)\in\{0, -1\},\  \ell(u_A, u_B, \xi_W=1)\in\{0, -\chi\},
\end{equation}
where $ \chi > 0 $ is a parameter.
Such `binary' costs capture settings where a fixed cost is incurred based on whether an underlying event (such as successful transmission of a packet) occurs or does not, given the background state $ \xi_W $.
The parameter $\chi$ captures the degree  to which the costs differ depending on $\xi_W$. With the specification in \eqref{eq:supstruct}, we now construct a superstructure of subclasses.
\begin{definition} \label{def:superstruc}
Let $M,N \in \{0,-1\}^{2\times 2}$ be matrices with each entry in $\{0, -1\}$.
A problem class $\Cscr(M,N)$ specified by the tuple  $(M, N)$ is the set
\begin{align}
\Cscr(M,N):=&\{D|D=(M, N, \Pbb, \Uscr_A,\Uscr_B, \chi)\ {\it where}\nonumber \\
&\ \Pbb\in \Pscr(\Xi), |\Uscr_A|=|\Uscr_B| =2\ {\it and}\ \chi\in [0,\infty)\}
\label{eq:classdef}
\end{align}
\def\Crsfs{\mathscr{C}}
Denote the set of all problem classes in our superstructure by $\Crsfs$.

An element $D=(M, N, \Pbb, \Uscr_A,\Uscr_B, \chi)\in \Cscr(M,N)$ is called a problem instance. The cost function $ \ell:\Uscr_A\times\Uscr_B\times\Xi_W\to \{0, -1, -\chi\} $ of this instance is given by
\begin{equation}
	\ell(u_A^i, u_B^j, 0)=[M]_{i+1\ j+1},\quad \ell(u_A^i, u_B^j, 1)=\chi[N]_{i+1\ j+1} \label{eq:costmat}
\end{equation}
\end{definition}

\begin{definition}[CAC class] \label{def:cac}
	The problem class $\Cscr(M,N)$ with $M,N$ as
\begin{equation}\label{eq:cacdef}
	M=\begin{pmatrix}
		-1&0\\
		0&-1
	\end{pmatrix} \text{and} N=\begin{pmatrix}
		0&-1\\
		-1&0
	\end{pmatrix},
\end{equation}
is referred to as the coordinate-anti-coordinate class, or in short, CAC class. $(M,N)$ satisfying \eqref{eq:cacdef} are said to be \textit{CAC form}.
\end{definition}
Notice that the cost described in equation \eqref{eq:egCACinstance} belongs to the CAC class with
and $ \chi = \frac{3}{4} $.

\begin{definition}[$\half$-CAC class] \label{def:halfcac}
The problem class $\Cscr(M,N)$ with $M,N$ as
	\begin{equation}\label{eq:halfcacdef}
		M=\begin{pmatrix}
			-1&0\\
			0& 0
		\end{pmatrix} \text{and} N=\begin{pmatrix}
			0&-1\\
			-1&0
		\end{pmatrix},
	\end{equation}
	is referred to as the half coordinate-anti-coordinate class, or in short, $\half$-CAC class. $(M,N)$ satisfying \eqref{eq:halfcacdef} are said to be \textit{$\half$-CAC form}.
\end{definition}

Each of the four entries in matrices $M$ and $N$ are allowed to take binary values in $\{0, -1\}$ for  instances in our superstructure. We thus have $2^4\times 2^4=256$ possible problem classes in our superstructure.

\subsection{Decentralised estimation: a motivating example}
We motivate our investigation through a concrete example of a decentralised estimation problem we introduced in~\cite{deshpande2022quantum}. Suppose that the agents $A$ and $B$ collaborate to produce an estimate $f:\Uscr_A\times\Uscr_B\to\Xi_W$ of $\xi_W$, given their local information. The choice of such an $f$ determines how their actions collate to produce the desired estimate, and thereby shapes the cost function. With $\chi(0):=1$ and $\chi(1):=\chi$, suppose that the cost is given by
\begin{equation}
\ell(u_A, u_B, \xi_W)=-\chi(\xi_W)\delta(\xi_W, f(u_A, u_B)).
\end{equation}
We now find that the choice of $f$, dictated by some estimation `mechanism', allots this estimation problem to a problem class within our superstructure. For instance if $f(u_A^i, u_B^j)=i\oplus j$, we find that the problem is an instance of the CAC class with $M$ and $N$ as given by \eqref{eq:cacdef}.
On the other hand, if $f(u_A^i, u_B^j)=i\vee j$, then the problem belongs to another class $\Cscr(M, N)$ with
\begin{equation}\label{eq:orestimate}
M=\begin{pmatrix}
			-1&0\\
			0& 0
		\end{pmatrix} \text{and} N=\begin{pmatrix}
			0&-1\\
			-1&-1
		\end{pmatrix}.
\end{equation}
In our previous article \cite{deshpande2022quantum}, we found that instances of the estimation problem with $f(u_A^i, u_B^j)=i\oplus j$ admit a quantum advantage. In particular, we find that quantum strategies enable the two agents to effectively collaborate during game-play, when such collaborations are restricted within the classical realm of passive common randomness. It is therefore of interest to examine what aggregations $f$ induce a cost structure that admits such a quantum advantage. Our investigation through this two-part series answers this query in a reasonably detailed manner. In this particular context of estimation, our analysis reveals that the cost structure \eqref{eq:orestimate} induced by the aggregate $f(u_A^i, u_B^j)=i\vee j$ does not admit a quantum advantage while that \eqref{eq:cacdef} induced by $f(u_A^i, u_B^j)$ does.

\section{Decision Strategies and Non-Local Advantages}\label{sec:nonlocal}
\subsection{Decision strategies}
We study decision problems in the above superstructure in space of stochastic policies that specify a probability distribution on $ \Uscr $ given the information of both players; in the classical Markov decision processes setting, these reduce to what are known as occupation measures \cite{borkar88convex} and have been employed in other information structures as well~\cite{kulkarni2014optimizer}. We refer the reader to our earlier work \cite{deshpande2022quantum} for more details. Under this framework, any decision strategy is described as a joint conditional probability distribution $Q\in \Pscr(\Uscr|\Xi)$
that is required to satisfy a certain set of constraints. Based on these constraints we have classes $ \Nscr\Sscr , \Gscr$  and $ \Qscr $ defined below.
In each case the expected cost of a problem instance $D$ under a strategy $Q$ is given by
\begin{align}
J(Q;D)&=\sum_{\xi_A, \xi_B, \xi_W}\sum_{u_A,u_B}\Pbb(\xi_A, \xi_B, \xi_W) \non\\
&\qquad \times \ell(u_A, u_B, \xi_W)Q(u_A, u_B|\xi_A, \xi_B).\label{eq:expcost}
\end{align}
Any distribution $Q$ by virtue of belonging to $\Pscr(\Uscr|\Xi)$, regardless of the strategic class under consideration, satisfies the  positivity and normalisation constraints for probability distributions,
\begin{equation}\label{eq:normalise}
Q(u|\xi)\geq 0\ \forall u, \xi,\quad \sum_u Q(u|\xi)=1\ \forall\ \xi.
\end{equation}
We investigate across three different strategic classes, each described by further restrictions on $ Q $.

\subsubsection{Local distributions} Set of \textit{local distributions} ($\Dscr$) is the set of distributions $Q\in \Pscr(\Uscr|\Xi)$ that correspond to locally randomized strategies. In a locally randomized strategy (also called \textit{behavioural strategy} in game theory), decision maker $i$ assigns a conditional probability distribution $Q_i\in\Pscr(\Uscr_i|\Xi_i)$ on his actions given his information. Then $Q$ assumes the form
\begin{equation}
Q(u_A, u_B|\xi_A, \xi_B)=Q_A(u_A|\xi_A)Q_B(u_B|\xi_B)\label{eq:locallyrandom}
\end{equation}
An important subset of local distributions is the set of deterministic strategies, $\Pi$, which is the set of all strategies $Q\in\Dscr$ expressible as
$$Q(u|\xi)=\delta(u_A, \gamma_A(\xi_A))\delta(u_B, \gamma_B(\xi_B))$$
where $\gamma_i:\Xi_i\to\Uscr_i$ for each $i=A, B$.
\subsubsection{Local polytope ($ \Lscr $)}
The local polytope $\Lscr$ is the set of all classical strategies implementable through an arbitrary amount of passive common randomness. It is the set of all $Q\in\Pscr(\Uscr|\Xi)$ expressible as
$$Q=\sum_{\omega\in\Omega} \Phi(\omega)\prod_i Q(u_i|\xi_i)$$
for a finite set $\Omega$, and distributions $\Phi\in\Pscr(\Omega)$ andd $Q_i\in\Pscr(\Uscr_i|\Xi_i)$ for each $i\in A, B$. By definition, one can note that $\Lscr=\conv(\Dscr)$. In fact, $\Lscr=\conv(\Pi)$.
We refer the reader to \cite{deshpande2022quantum, saldi2022geometry} for more details.


\subsubsection{Quantum ellitope $\Qscr$}The \textit{quantum ellitope} denoted $\mathcal{Q}$ is the set of distributions  $Q\in \Pscr(\Uscr|\Xi)$ generated by quantum strategies. We mathematically specify a quantum strategy as a tuple $Q=(\Hscr_A, \Hscr_B, \rho_{AB}, \{P_{u_A}^A(\xi_A)\}, \{P_{u_B}^B(\xi_B)\})$ where

\textbf{(i)} $\Hscr_A$ and $\Hscr_B$ are finite dimensional Hilbert spaces.

 \textbf{(ii)} $\rho_{AB}\in \Bscr(\Hscr_A\otimes\Hscr_B)$ is a density matrix, \ie $\rho_{AB}\succeq 0$ and $\Tr(\rho_{AB})=1$.

\textbf{(iii)} For each $i\in\{A, B\}$,  $P_{u_i}^i(\xi_i)\in\Bscr(\Hscr_i)$ are projection operators that obey $P_{u_i}^i(\xi_i)^2=P_{u_i}^i(\xi_i)$ and $\sum_{u_i} P_{u_i}^i(\xi_i)=\Ibf_i$ for each $\xi_i\in\Xi_i$ where $\Ibf_i\in\Hscr_i$ is the identity operator.

The described quantum strateguy $Q$ renders the following occupation measures.
\begin{equation}
Q(u_A, u_B|\xi_A, \xi_B)=\Tr\left(P^{(A)}_{u_A}(\xi_A)\otimes P^{(B)}_{u_B}(\xi_B)\rho_{AB}\right). \label{eq:quantconditional}
\end{equation}
Thus $ \Qscr $ is the set of all distributions $ Q $ that satisfy \eqref{eq:quantconditional} for some choice of the tuple $(\Hscr_A, \Hscr_B, \rho_{AB}$, $\{P^{(A)}_{u_A}(\xi_A)\}_{u_A, \xi_A}$, $\{P^{(B)}_{u_B}\}_{u_B, \xi_B})$. A detailed discussion on the physical implementation of quantum strategies during gameplay can be found in \cite{deshpande2022quantum}. Nevertheless, the expected cost of a problem $D$ under the strategy $Q$ is given by (from \eqref{eq:quantconditional} and \eqref{eq:expcost})
\begin{align}
	J(Q; D)
	&=\sum_{\xi_A, \xi_B, \xi_W}\Pbb(\xi_A, \xi_B, \xi_W)\sum_{ u_A, u_B} \ell(u_A, u_B, \xi_W) \non \\
	&\times\Tr\left(\rho_{AB}P_{u_A}^{(A)}(\xi_A)P_{u_B}^{(B)}(\xi_B)\right) \label{eq:qcost}
\end{align}
\subsubsection{No-signalling polytope} The set of \textit{no-signalling distributions}, denoted $\mathcal{NS}$, is the set of distributions $Q\in \Pscr(\Uscr|\Xi)$ that satisfy the following \textit{no-signalling} constraints that prohibits communication between the two agents~\cite{matthews2012nonsigcode, saldi2022geometry}, as demanded by the stasis of the information structure. \\
We have $\forall {u_A, \xi_A, \xi_B, \xi_B'}$
\begin{equation}
    \sum_{u_B} Q(u_A, u_B|\xi_A, \xi_B)
    =\sum_{u_B'} Q(u_A, u_B'|\xi_A, \xi_B');\label{eq:nosig1-222}
\end{equation}
and $\forall {u_B, \xi_A, \xi_A', \xi_B}$
\begin{equation}
    \sum_{u_A} Q(u_A, u_B|\xi_A, \xi_B)
    =  \sum_{u_A'} Q(u_A', u_B|\xi_A', \xi_B); \label{eq:nosig2-222}
\end{equation}
This asserts that the choice of conditional distribution of one agent given his information does not affect the outcome distribution of the other agent, and thus the joint distribution respects the prohibition of \textit{faster than light communication}. Since this set of distributions is characterised by a finite number of linear equalities and inequalities, $ \Nscr\Sscr $ a polytope.

\subsubsection{Centralised polytope} We call the whole set of conditional distributions on $\Pscr(\Uscr|\Xi)$ the \textit{centralised polytope}.
\subsection{Advantages}
The following proposition shows that the quantum ellitope includes all local distributions.
\begin{proposition}\label{prop:qsupersetpi}
For every $\pi_{\gamma}\in\Pi$ specified by some deterministic strategy $\gamma$, there exists a $ Q\in\Qscr$ such that $\pi_\gamma\equiv Q$. Thus,
\begin{equation}\label{eq:heirarchy}
\mathcal{L}\subset\mathcal{Q}\subset\mathcal{NS}\subset \Pscr(\Uscr|\Xi).
\end{equation}
\end{proposition}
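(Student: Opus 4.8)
The plan is to prove the chain \eqref{eq:heirarchy} one inclusion at a time, the heart of the matter being the promised realization of deterministic strategies by quantum strategies. \textbf{First, that every $\pi_\gamma\in\Pi$ lies in $\Qscr$.} Given $\gamma=(\gamma_A,\gamma_B)$, I would exhibit an explicit quantum strategy reproducing $\pi_\gamma$: take $\Hscr_A=\Hscr_B=\Cbb$ (or any fixed Hilbert space), let $\rho_{AB}$ be a pure product state, and for each $i$ and each $\xi_i$ set the projector onto the prescribed action to the identity and its complement to zero, i.e. $P^i_{\gamma_i(\xi_i)}(\xi_i)=\mathbf{I}_i$ and $P^i_{u_i}(\xi_i)=0$ for $u_i\neq\gamma_i(\xi_i)$. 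These trivially satisfy idempotency and completeness $\sum_{u_i}P^i_{u_i}(\xi_i)=\mathbf{I}_i$, so they form a legitimate strategy, and \eqref{eq:quantconditional} returns $\Tr(\mathbf{I}\otimes\mathbf{I}\,\rho_{AB})=1$ exactly when $(u_A,u_B)=(\gamma_A(\xi_A),\gamma_B(\xi_B))$ and $0$ otherwise. Hence $Q\equiv\pi_\gamma$, giving $\Pi\subset\Qscr$.

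\textbf{Next, I would upgrade this to $\Lscr\subset\Qscr$ using convexity of $\Qscr$.} Since the excerpt records $\Lscr=\conv(\Pi)$, it suffices to show $\Qscr$ is convex. Given strategies realizing $Q^{(1)},Q^{(2)}$ and $\lambda\in[0,1]$, I would build the direct-sum strategy on $\Hscr_A^{(1)}\oplus\Hscr_A^{(2)}$ and $\Hscr_B^{(1)}\oplus\Hscr_B^{(2)}$ with block-diagonal state $\lambda\rho^{(1)}_{AB}\oplus(1-\lambda)\rho^{(2)}_{AB}$ and block-diagonal projectors $P^{(1)}\oplus P^{(2)}$. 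The projectors remain idempotent and complete and the state remains a valid density matrix; since the off-diagonal blocks contribute nothing to the trace in \eqref{eq:quantconditional}, the resulting distribution is $\lambda Q^{(1)}+(1-\lambda)Q^{(2)}$. Thus $\Qscr$ is convex, and combined with the first step, $\Lscr=\conv(\Pi)\subset\Qscr$.

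\textbf{Then, that quantum distributions are no-signalling, $\Qscr\subset\mathcal{NS}$.} Fixing a quantum strategy, I would compute the $B$-marginal directly: using completeness $\sum_{u_B}P^B_{u_B}(\xi_B)=\mathbf{I}_B$ and linearity of the trace, $\sum_{u_B}Q(u_A,u_B|\xi_A,\xi_B)=\Tr\!\big((P^A_{u_A}(\xi_A)\otimes\mathbf{I}_B)\rho_{AB}\big)=\Tr\!\big(P^A_{u_A}(\xi_A)\rho_A\big)$, where $\rho_A=\Tr_B(\rho_{AB})$ is the reduced state, manifestly independent of $\xi_B$. This is precisely constraint \eqref{eq:nosig1-222}, and the symmetric computation tracing out $A$ yields \eqref{eq:nosig2-222}. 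The final inclusion $\mathcal{NS}\subset\Pscr(\Uscr|\Xi)$ holds by definition, since $\mathcal{NS}$ is defined as the distributions in $\Pscr(\Uscr|\Xi)$ meeting the extra no-signalling equalities.

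The only genuinely delicate point in the $\subseteq$-chain is verifying that the direct-sum construction preserves all three defining properties of a quantum strategy simultaneously; the remaining inclusions reduce to one-line trace manipulations from \eqref{eq:quantconditional}. If the strict inclusions in \eqref{eq:heirarchy} are intended, properness of $\Lscr\subsetneq\Qscr$ is exactly the quantum advantage already established for the CAC instance \eqref{eq:egCACinstance} in \cite{deshpande2022quantum}, while $\Qscr\subsetneq\mathcal{NS}$ and $\mathcal{NS}\subsetneq\Pscr(\Uscr|\Xi)$ follow respectively from the existence of super-quantum (Tsirelson-violating) no-signalling boxes and of signalling distributions.
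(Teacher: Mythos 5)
Your proposal is correct, and its core step coincides exactly with the paper's: the paper also realizes a deterministic $\pi_\gamma$ by choosing an arbitrary state $\rho_{AB}$ and the trivial projectors $P^{(i)}_{u_i}(\xi_i)=\delta_{u_i\gamma_i(\xi_i)}\Ibf$, so that \eqref{eq:quantconditional} collapses to $\delta_{u_A\gamma_A(\xi_A)}\delta_{u_B\gamma_B(\xi_B)}$. Where you genuinely differ is in what gets proved versus cited: the paper finishes in one line by invoking the convexity of $\Qscr$ from \cite{deshpande2022quantum} (which, with $\Lscr=\conv(\Pi)$, yields $\Lscr\subset\Qscr$), and it never argues the remaining inclusions $\Qscr\subset\Nscr\Sscr\subset\Pscr(\Uscr|\Xi)$, treating them as standard facts. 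You instead make the chain self-contained: your direct-sum construction with state $\lambda\rho^{(1)}_{AB}\oplus(1-\lambda)\rho^{(2)}_{AB}$ and projectors $P^{(1)}\oplus P^{(2)}$ is a correct proof of convexity (the state stays positive semidefinite with unit trace, the projectors stay idempotent and complete, and only the diagonal blocks of $\Hscr_A\otimes\Hscr_B$ survive the trace), and your marginalization $\sum_{u_B}Q(u_A,u_B|\xi_A,\xi_B)=\Tr\bigl(P^{(A)}_{u_A}(\xi_A)\,\rho_A\bigr)$ with $\rho_A=\Tr_B(\rho_{AB})$ is precisely the right verification of \eqref{eq:nosig1-222}--\eqref{eq:nosig2-222}. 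The paper's route buys brevity at the cost of leaning on the companion paper; yours buys a fully self-contained verification of every link in \eqref{eq:heirarchy}. One small remark: the proposition only needs the containments, not their strictness --- strictness of $\Lscr\subset\Qscr$ is exactly what the quantum-advantage results of the paper are about --- so your closing paragraph on properness, while accurate, is not required here.
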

\begin{proof}
Choose $\Hscr_A$, $\Hscr_B$ and $\rho_{AB}\in\Bscr(\Hscr_A\times\Hscr_B)$. Take $P_{u_i}^{(i)}(\xi_i)=\delta_{u_i\gamma_i(\xi_i)}\Ibf$ so that
\begin{align}
Q(u_A, u_B|\xi_A, \xi_B)&=\Tr(P^{(A)}_{u_A}(\xi_A)\otimes P^{(B)}_{u_B}(\xi_B)\rho_{AB})\nonumber\\
&=\delta_{u_A\gamma_A(\xi_A)}\delta_{u_B\gamma_B(\xi_B)}\nonumber\\
&=\pi_\gamma(u_A, u_B|\xi_A, \xi_B),
\end{align}
for all $u_A,u_B,\xi_A,\xi_B$. The convexity of $\Qscr$~\cite{deshpande2022quantum} completes the proof.
\end{proof}

For $S=\Lscr, \Pi, \Qscr, \Nscr\Sscr$ and $\Pscr(\Uscr|\Xi)$, define
\begin{equation}
J^*_{S}(D):=\inf_{Q\in S} J(Q;D)
\end{equation}
and denote the centralised optimum $J^*_{\Pscr(\Uscr|\Xi)}$ by $J^{**}(D)$
Note that since $\Lscr=\conv(\Pi)$ and $J(Q;D)$ is linear in $Q$, $J^*_\Pi(D)=J^*_\Lscr(D)$. Then from \eqref{eq:heirarchy} the following relationship holds between the respective infima,
\begin{equation}\label{eq:costheirarchy}
J^*_\Pi(D)=J^*_{\mathcal{L}}(D)\geq J^*_{\mathcal{Q}}(D)\geq J^*_{\mathcal{NS}}(D)\geq J^{**}(D).
\end{equation}
\begin{definition}
We say that the problem class $\Cscr(M,N)$
\begin{enumerate}
	\item admits a \textit{quantum advantage} if $\exists D\in \Cscr(M,N): J^*_{\mathcal{L}}(D)> J^*_{\mathcal{Q}}(D)$.
	\item admits a \textit{no-signalling advantage} if  $\exists D\in\Cscr(M, N): J^*_{\mathcal{L}}(D)> J^*_{\mathcal{NS}}(D)$.
	\item admits a \textit{centralisation advantage} if  $\exists D\in\Cscr(M, N): J^*_{\mathcal{L}}(D)> J^{**}(D)$.
\end{enumerate}
\end{definition}
For any instance $D$, we have from \eqref{eq:costheirarchy} that if $J^*_{\mathcal{L}}(D)= J^*_{\mathcal{NS}}(D)$ then  $J^*_{\mathcal{L}}(D)= J^*_{\mathcal{Q}}(D)= J^*_{\mathcal{NS}}(D)$.  Hence if a problem class $\Cscr(M, N)$ does not admit a no-signalling advantage then it does not admit a quantum advantage.  Similarly $J^*_{\mathcal{L}}(D)= J^{**}(D)$ forces $J^*_{\mathcal{L}}(D)= J^*_{\mathcal{Q}}(D)= J^*_{\mathcal{NS}}(D)=J^{**}(D)$ and thus if a problem class $\Cscr(M, N)$ does not admit a centralisation advantage then it does not admit a no-signalling and a quantum advantage.

\subsection{No-signalling polytope and Bell inequalities}\label{sec:bell}
At some level our work is directly in correspondence with the state of the art in the geometry of quantum correlations.
The description of the quantum ellitope remains largely abstract to this date, though it is known to be convex and non-polytopic~\cite{ramanathan2016noquantum}, \cite{werner2001multipartitebell}. A convergent hierarchy of semi-definite programs is known that characterizes the set~\cite{navascues2008hierarchy}.
More is known about the no-signalling polytope $\Nscr\Sscr$  that contains the quantum ellitope~\cite{popescu1994nonlocquantum}. In our case, since $|\Uscr_i|=\Xi_i|=2$ for each $i$, the   $\Nscr\Sscr$ has 24 vertices, 16 of which are local and correspond to the vertices of $\Lscr$ which constitute the set of deterministic strategies, $\Pi$. The following proposition concisely enumerates these vertices. We refer the reader to~\cite{barrett2005nonlocal} for further discussion.
\begin{proposition}
\begin{enumerate}
	\item The set $\Pi$ of deterministic strategies is given by
	\begin{equation}
		\pi^{\alpha\gamma\beta\delta}(u_A^i, u_B^j|\xi_A, \xi_B)=\begin{cases}
			1 & i=\alpha \xi_A \oplus \beta\\
			& j= \gamma\xi_B\oplus \delta\\
			0 & \text{otherwise.}
		\end{cases}.\label{eq:localvertex}
	\end{equation}
where $\alpha, \beta, \gamma, \delta\in \{0, 1\}$.
	\item The 8 non-local vertices of  $\Nscr\Sscr$ are given by
	\begin{multline}
		Q^{\alpha\beta\delta}(u_A^i, u_B^j|\xi_A, \xi_B)\\
		=\begin{cases}
			1/2 & i\oplus j=\xi_A.\xi_B\oplus\alpha \xi_A\oplus\beta \xi_B\oplus\delta\\
			0 & \text{otherwise.}
		\end{cases}\label{eq:nonlocalvertex}
	\end{multline}
where $\alpha, \beta, \delta\in \{0, 1\}$.
\end{enumerate}
\end{proposition}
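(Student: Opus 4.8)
The two parts are logically separate, so I would handle them in turn: first pin down the $16$ local (deterministic) vertices by an elementary counting/bijection argument, then verify that the $8$ listed boxes are extreme no-signalling points, and finally argue that these $24$ points exhaust the vertex set of $\Nscr\Sscr$. The first two steps are direct verifications; the genuine obstacle is the completeness claim in the last step.

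\noindent\textbf{Part 1 (local vertices).} A deterministic strategy is fixed by a pair $\gamma_A:\Xi_A\to\Uscr_A$, $\gamma_B:\Xi_B\to\Uscr_B$; since $|\Xi_i|=|\Uscr_i|=2$ there are $4$ choices of each, hence $16$ strategies. Identifying $A$'s chosen action with its index $i\in\{0,1\}$, the point I would make is that \emph{every} map $\{0,1\}\to\{0,1\}$ is affine over $\mathbb{F}_2$, \ie\ $i=\alpha\,\xi_A\oplus\beta$ for a unique $(\alpha,\beta)\in\{0,1\}^2$ (the four pairs recover the constant-$0$, constant-$1$, identity, and negation maps). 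The same for $B$ gives $j=\gamma\,\xi_B\oplus\delta$, and substituting into $Q(u|\xi)=\delta(u_A,\gamma_A(\xi_A))\delta(u_B,\gamma_B(\xi_B))$ reproduces \eqref{eq:localvertex}. Finally each $\pi^{\alpha\gamma\beta\delta}$ is a $0/1$-valued conditional distribution, hence an extreme point of the centralised polytope $\Pscr(\Uscr|\Xi)$; as it also lies in $\Nscr\Sscr\subseteq\Pscr(\Uscr|\Xi)$, it is automatically a vertex of $\Nscr\Sscr$.

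\noindent\textbf{Part 2 (the $8$ non-local boxes).} For \eqref{eq:nonlocalvertex} I would first check membership in $\Nscr\Sscr$: summing $Q^{\alpha\beta\delta}$ over $u_B$ leaves mass $\half$ on each $i$ (exactly one admissible $j$ accompanies each $i$), independent of $\xi_B$, and symmetrically for $B$, so \eqref{eq:normalise} and \eqref{eq:nosig1-222}--\eqref{eq:nosig2-222} all hold. For extremality, suppose $Q^{\alpha\beta\delta}=\lambda R+(1-\lambda)S$ with $R,S\in\Nscr\Sscr$ and $0<\lambda<1$. Positivity forces $R,S$ to vanish wherever $Q^{\alpha\beta\delta}$ does, so for each input pair both are supported on the two outcomes satisfying $i\oplus j=r(\xi_A,\xi_B):=\xi_A\cdot\xi_B\oplus\alpha\xi_A\oplus\beta\xi_B\oplus\delta$. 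Writing $p_{\xi_A\xi_B}$ for the weight $R$ places on the $i=0$ outcome, the $A$-no-signalling equations give $p_{\xi_A0}=p_{\xi_A1}$, while the $B$-no-signalling equations relate $p$ across the two values of $\xi_A$ with a sign flip governed by $r(\xi_A,\xi_B)$. The crux is that the quadratic term $\xi_A\cdot\xi_B$ makes $r$ change parity on exactly one of the four input pairs (``frustration''), and the resulting linear system forces $p_{\xi_A\xi_B}=\half$ everywhere, so $R=S=Q^{\alpha\beta\delta}$. (This frustration is also why these boxes are genuinely non-local and are \emph{not} inherited as vertices of $\Pscr(\Uscr|\Xi)$.) One could equivalently reduce to the canonical case $\alpha=\beta=\delta=0$ and note that local input/output relabellings are affine automorphisms of $\Nscr\Sscr$ permuting the family $\{Q^{\alpha\beta\delta}\}$.

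\noindent\textbf{Completeness (the main obstacle).} The hardest step is showing these $24$ points are \emph{all} the vertices. Here I would pass to the correlator parametrisation: encoding the output bits as $\pm1$ and using no-signalling to define marginals $\langle A_x\rangle,\langle B_y\rangle$ and correlators $\langle A_xB_y\rangle$, the map $Q\mapsto(\langle A_x\rangle,\langle B_y\rangle,\langle A_xB_y\rangle)$ is an affine isomorphism of $\Nscr\Sscr$ onto a full-dimensional polytope in $\Real^8$, under which the constraints \eqref{eq:normalise} become the $16$ positivity facets $1+(-1)^a\langle A_x\rangle+(-1)^b\langle B_y\rangle+(-1)^{a\oplus b}\langle A_xB_y\rangle\ge 0$. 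A vertex must saturate $8$ independent facets, and a finite case analysis on which facets are tight shows that every vertex either has all four marginals in $\{\pm1\}$ with $\langle A_xB_y\rangle=\langle A_x\rangle\langle B_y\rangle$ (the $16$ boxes of Part 1), or has all four marginals $0$ and all four correlators in $\{\pm1\}$. In the latter case the vertex condition forces the product of the correlators to be $-1$, equivalently $\bigoplus_{\xi_A,\xi_B}r(\xi_A,\xi_B)=1$, which a one-line XOR computation confirms for $r=\xi_A\cdot\xi_B\oplus\alpha\xi_A\oplus\beta\xi_B\oplus\delta$; this leaves exactly $2^3=8$ sign patterns, in bijection with $(\alpha,\beta,\delta)$, reproducing \eqref{eq:nonlocalvertex}. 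It is this exhaustive facet enumeration that actually earns the count $16+8=24$; it can also be invoked directly from~\cite{barrett2005nonlocal}.
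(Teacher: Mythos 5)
Your proposal is correct, but it takes a genuinely different route from the paper: the paper offers \emph{no proof} of this proposition at all, treating the vertex enumeration of $\Nscr\Sscr$ as a known result from the quantum-information literature and simply directing the reader to~\cite{barrett2005nonlocal}. What you supply is a self-contained argument, and its ingredients are sound: the observation that every map $\{0,1\}\to\{0,1\}$ is $\mathbb{F}_2$-affine cleanly explains the $(\alpha,\gamma,\beta,\delta)$ indexing in \eqref{eq:localvertex} and the count of $16$; the extremality of the boxes in \eqref{eq:nonlocalvertex} via the ``frustration'' of $r(\xi_A,\xi_B)=\xi_A\cdot\xi_B\oplus\alpha\xi_A\oplus\beta\xi_B\oplus\delta$ (the linear system $p_{00}=p_{01}=1-p_{11}=1-p_{10}=1-p_{00}$ forcing $p\equiv\half$) is exactly the standard PR-box argument; and your remark that $0/1$-valued points of $\Pscr(\Uscr|\Xi)$ are automatically extreme in the subpolytope $\Nscr\Sscr$ is the right way to get Part 1's vertices for free. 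You also correctly identify where the real work lies: completeness, \ie\ that no further vertices exist, which requires the facet enumeration in the correlator parametrisation (or an appeal to~\cite{barrett2005nonlocal}, which is precisely what the paper does for the whole proposition). The trade-off is clear: the paper's citation keeps the exposition short since this scenario ($|\Uscr_i|=|\Xi_i|=2$) is classical territory, while your proof makes the result verifiable in place --- though note that your ``finite case analysis'' of which positivity facets can be simultaneously tight is asserted rather than executed, so as written your completeness step ultimately leans on the same external reference the paper cites.
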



In quantum information theory, the non-local nature of quantum correlations is principally captured by their violation of a sum of experimentally testable correlations, known as the \textit{Bell inequalities}~\cite{bell1964epr}. Geometrically, non-locality implies that the first inclusion in \eqref{eq:heirarchy} is strict; thus Bell inequalities linearly separate $\Lscr$ from some point in $\Qscr$.
One of the faces of the local polytope which is not a face of the no-signalling polytope is described by the popular CHSH inequality \cite{clauser1969chsh},  \cite{goh2018geometry}, which is a generalization of Bell's original inequality. We illustrate this geometry of the CHSH inequality in figure \ref{fig:polytopes}. We direct the reader to \cite{goh2018geometry} for a deeper peek into the relative geometry of $\Lscr$, $\Qscr$ and $\Nscr\Sscr$.

\begin{figure}
\centering
\includegraphics[scale=0.4]{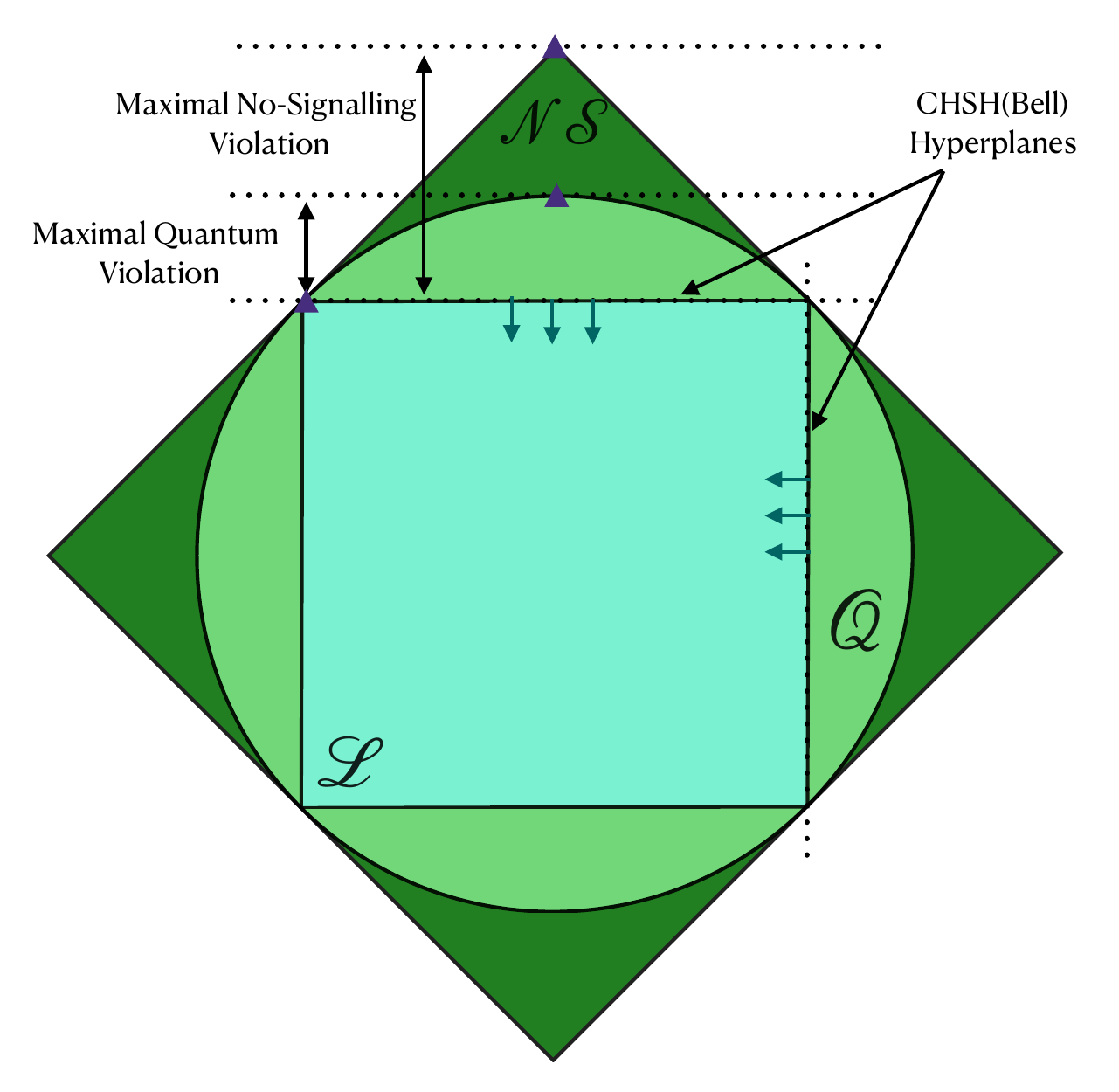}
\caption{Geometry of $\Nscr\Sscr$, $\Lscr$ and $\Qscr$ and violation of a CHSH (Bell) inequality.}\label{fig:polytopes}
\end{figure}


\section{Equivalences in the Class Superstructure}\label{sec:equivalences}
Recall that our superstructure comprises of $256$ different problem classes, each corresponding to a binary matrix tuple $(M,N)$.
In this section we establish a set of equivalences between the existence of a quantum advantage across problem classes. The proofs of these propositions are relegated to Appendix \ref{app:equiproofs}.

To begin, Proposition~\ref{prop:transpose} asserts such an equivalence between a problem class $\Cscr(M\t, N\t)$ and the $\Cscr(M, N)$, for an arbitrary tuple $(M, N)$. This is intuitive since transposition of $M,N$ in fact corresponds to exchanging the two agents in the problem instance.
\begin{proposition}  \label{prop:transpose}
	\textit{(Transposition equivalence (exchange of agents))}
 The following are equivalent:
\begin{enumerate}
	\item $\Cscr(M,N)$ admits a quantum advantage.
	\item  $\Cscr(M\t, N\t)$ admits a quantum advantage.
\end{enumerate}
\end{proposition}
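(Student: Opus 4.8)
The plan is to exhibit an explicit, cost-preserving bijection that turns any instance of $\Cscr(M,N)$ into an instance of $\Cscr(M\t,N\t)$ by literally relabelling agent $A$ as agent $B$ and vice versa, and then to check that this relabelling respects each strategic class. Given $D=(M,N,\Pbb,\Uscr_A,\Uscr_B,\chi)\in\Cscr(M,N)$, I would define $\tilde D:=(M\t,N\t,\tilde\Pbb,\Uscr_B,\Uscr_A,\chi)$ with the exchanged prior $\tilde\Pbb(\xi_A,\xi_B,\xi_W):=\Pbb(\xi_B,\xi_A,\xi_W)$. Since $\tilde\Pbb\in\Pscr(\Xi)$ and $\chi$ is unchanged, $\tilde D\in\Cscr(M\t,N\t)$, and because applying the construction twice returns $D$, the map $D\mapsto\tilde D$ is an involution and hence a bijection of $\Cscr(M,N)$ onto $\Cscr(M\t,N\t)$. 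By \eqref{eq:costmat} together with $[M\t]_{i+1\,j+1}=[M]_{j+1\,i+1}$ (and likewise for $N$), the two cost functions obey the pointwise identity $\ell^{\tilde D}(u_A^i,u_B^j,\xi_W)=\ell^{D}(u_A^j,u_B^i,\xi_W)$ for all $i,j$ and all $\xi_W$.

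Next I would introduce the matching swap map on policies: for $Q\in\Pscr(\Uscr|\Xi)$ attached to $D$, set $\tilde Q(u_A^i,u_B^j|\xi_A,\xi_B):=Q(u_A^j,u_B^i|\xi_B,\xi_A)$. Relabelling the summation indices $(i,j,\xi_A,\xi_B)\mapsto(j,i,\xi_B,\xi_A)$ in \eqref{eq:expcost} and invoking the two identities for $\tilde\Pbb$ and $\ell^{\tilde D}$ above gives $J(\tilde Q;\tilde D)=J(Q;D)$; as the inverse swap has the same form, $Q\mapsto\tilde Q$ is a cost-preserving bijection of the policies for $D$ onto those for $\tilde D$. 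It then suffices to show that this bijection carries $\Lscr$ onto $\Lscr$ and $\Qscr$ onto $\Qscr$: granting this we obtain $J^*_{\Lscr}(\tilde D)=J^*_{\Lscr}(D)$ and $J^*_{\Qscr}(\tilde D)=J^*_{\Qscr}(D)$, so $D$ realizes a strict gap $J^*_{\Lscr}>J^*_{\Qscr}$ iff $\tilde D$ does, and the equivalence of the two classes follows by ranging over all $D$ and using the instance bijection. The local case is immediate from the symmetry of the product form \eqref{eq:locallyrandom}: writing $Q=Q_AQ_B$, the image $\tilde Q$ factors as $\tilde Q_A\tilde Q_B$ with $\tilde Q_A(u_A^i|\xi_A):=Q_B(u_B^i|\xi_A)$ and $\tilde Q_B(u_B^j|\xi_B):=Q_A(u_A^j|\xi_B)$, so $\tilde Q\in\Lscr$ (and the same argument preserves $\Dscr$ and $\Pi$).

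The one genuinely non-trivial step — the main obstacle — is preserving the quantum ellitope $\Qscr$, since here the tensor-product structure must be realigned explicitly. Given a quantum strategy $(\Hscr_A,\Hscr_B,\rho_{AB},\{P^{(A)}_{u_A}(\xi_A)\},\{P^{(B)}_{u_B}(\xi_B)\})$ realizing $Q$ through \eqref{eq:quantconditional}, let $W:\Hscr_A\otimes\Hscr_B\to\Hscr_B\otimes\Hscr_A$ be the unitary swap $W(x\otimes y)=y\otimes x$. I would check that the tuple $(\Hscr_B,\Hscr_A,\,W\rho_{AB}W^\dagger,\,\{P^{(B)}_{u}(\xi)\},\{P^{(A)}_{u}(\xi)\})$ is a valid quantum strategy for $\tilde D$ (the conjugated state is again a density matrix and the projectors still satisfy the completeness relation in item (iii)) and that it realizes precisely $\tilde Q$:
\begin{align}
\tilde Q(u_A^i,u_B^j|\xi_A,\xi_B)
&=\Tr\!\left(P^{(B)}_{u_A^i}(\xi_A)\otimes P^{(A)}_{u_B^j}(\xi_B)\,W\rho_{AB}W^\dagger\right)\nonumber\\
&=\Tr\!\left(P^{(A)}_{u_B^j}(\xi_B)\otimes P^{(B)}_{u_A^i}(\xi_A)\,\rho_{AB}\right)\nonumber\\
&=Q(u_A^j,u_B^i|\xi_B,\xi_A),
\end{align}
where the middle equality uses cyclicity of the trace and $W^\dagger(X\otimes Y)W=Y\otimes X$. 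This is exactly the defining relation of the swap map, so $\tilde Q\in\Qscr$; applying the identical construction to $\tilde D$ shows the correspondence is onto $\Qscr$. With the local and quantum classes both preserved, the chain of equalities of optimal costs closes the argument.
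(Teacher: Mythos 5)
Your proposal is correct and follows essentially the same route as the paper's own proof: exchange the two agents by swapping the arguments of the prior, the action spaces, and the two players' projector families, then verify that costs (and hence the optimal local and quantum costs) are preserved. The one notable difference is in the quantum step: where the paper writes the exchanged state as $\rho_{AB}^\top$ and appeals to ``properties of the trace,'' your explicit swap unitary $W$ with $W\rho_{AB}W^\dagger$ and the identity $W^\dagger(X\otimes Y)W=Y\otimes X$ is the precise statement actually needed---a literal matrix transpose would instead transpose the projectors and does not yield the claimed trace identity in general---so your rendering of this step is the more rigorous one.
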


Recall the matrix $\Rsf$ from \eqref{eq:rdef}. We show in Proposition~\ref{prop:permute} that the existence of a quantum advantage in $\Cscr(\Rsf M, \Rsf N)$ and likewise in $\Cscr(M\Rsf, N\Rsf)$ is equivalent to that in $\Cscr(M, N)$. The tuples $(\Rsf M, \Rsf N)$ and  $( M\Rsf, N\Rsf)$ correspond to an  exchange of rows and columns in $(M, N)$, respectively. An exchange of rows is tantamount to relabelling the actions of player $A$ as $(u_A^0, u_A^1)\mapsto (u_A^1, u_A^0)$, and those of columns corresponds to a similar relabelling for player $B$.
\begin{proposition}
\label{prop:permute}
\textit{(Permutation equivalence (relabelling of actions))}
Let $\Rsf$ be as defined \eqref{eq:rdef}. Then the following are equivalent:
\begin{enumerate}
	\item  $\Cscr(\mathsf{R}M, \mathsf{R}N)$ admits quantum advantage.
	\item  $\Cscr(M\mathsf{R}, N\mathsf{R})$ admits quantum advantage.
	\item  $\Cscr(M,N)$ admits quantum advantage.
\end{enumerate}\end{proposition}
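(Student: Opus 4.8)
The plan is to exhibit, for each of the two relabellings, a cost-preserving bijection on strategies that simultaneously preserves membership in $\Lscr$ and in $\Qscr$; the equivalence of the three statements then follows by reducing (1) and (2) to (3). I would first treat the equivalence of (1) and (3) (the row swap $M\mapsto\Rsf M$, which relabels player $A$'s actions) and then observe that the identical argument with columns handles (2). Since the three classes $\Cscr(\Rsf M,\Rsf N)$, $\Cscr(M\Rsf,N\Rsf)$ and $\Cscr(M,N)$ carry exactly the same free data $(\Pbb,\Uscr_A,\Uscr_B,\chi)$, I fix an instance $D=(M,N,\Pbb,\Uscr_A,\Uscr_B,\chi)$ and let $D'=(\Rsf M,\Rsf N,\Pbb,\Uscr_A,\Uscr_B,\chi)$ be its row-swapped counterpart.

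Define the involution on $\Uscr_A$ sending $u_A^i\mapsto u_A^{\sim i}$, and the induced map $Q\mapsto Q'$ on conditional distributions by $Q'(u_A^i,u_B^j|\xi_A,\xi_B):=Q(u_A^{\sim i},u_B^j|\xi_A,\xi_B)$. Because $\sim$ is an involution, $Q\mapsto Q'$ is its own inverse, hence a bijection of $\Pscr(\Uscr|\Xi)$. Using the identity $[\Rsf M]_{i+1\,j+1}=[M]_{(\sim i)+1\,j+1}$ (and likewise for $N$), a relabelling of the summation index $i\mapsto\sim i$ in the expected cost~\eqref{eq:expcost} shows that $J(Q';D')=J(Q;D)$.

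Next I would verify that this bijection preserves each strategic class. For $\Lscr$, writing a local $Q=\sum_{\omega}\Phi(\omega)\,Q_A(u_A|\xi_A)Q_B(u_B|\xi_B)$ and absorbing the outcome swap into player $A$'s kernels via $Q_A'(u_A^i|\xi_A):=Q_A(u_A^{\sim i}|\xi_A)$ exhibits $Q'$ in the same product-mixture form, so $Q\in\Lscr\iff Q'\in\Lscr$. The crux is the analogous claim for $\Qscr$: given a quantum strategy realising $Q$ through $(\Hscr_A,\Hscr_B,\rho_{AB},\{P^{(A)}_{u_A}(\xi_A)\},\{P^{(B)}_{u_B}(\xi_B)\})$, permute player $A$'s projectors by $\widetilde P^{(A)}_{u_A^i}(\xi_A):=P^{(A)}_{u_A^{\sim i}}(\xi_A)$, leaving $\rho_{AB}$ and player $B$'s projectors untouched. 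Since a relabelling of measurement outcomes preserves both idempotency and the completeness relation $\sum_{u_A}\widetilde P^{(A)}_{u_A}(\xi_A)=\sum_{u_A}P^{(A)}_{u_A}(\xi_A)=\Ibf_A$, the tuple is again a valid quantum strategy, and by~\eqref{eq:quantconditional} it realises exactly $Q'$; hence $Q\in\Qscr\iff Q'\in\Qscr$.

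Combining the cost identity with the two class invariances gives $J^*_{\Lscr}(D')=J^*_{\Lscr}(D)$ and $J^*_{\Qscr}(D')=J^*_{\Qscr}(D)$ upon taking infima over the respective bijectively matched classes. Consequently $J^*_{\Lscr}(D')>J^*_{\Qscr}(D')$ holds for some $D'$ if and only if $J^*_{\Lscr}(D)>J^*_{\Qscr}(D)$ holds for the corresponding $D$, which is precisely the equivalence of (1) and (3). Replacing the left action of $\Rsf$ by the right action $M\mapsto M\Rsf$ swaps columns and relabels player $B$'s actions; running the identical argument with the involution on $\Uscr_B$ and player $B$'s projectors yields the equivalence of (2) and (3), completing the chain. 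The only genuinely quantum input is the outcome-relabelling invariance of projective measurements; everything else is a change of summation variable, so I expect that step to be the one requiring care, though it is routine once stated.
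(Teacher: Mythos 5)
Your proposal is correct, but it takes a mechanically different route from the paper. The paper never transforms strategies at all: for an instance $D=(M,N,\Pbb,\Uscr_A,\Uscr_B,\chi)$ it builds the counterpart $D'=(\Rsf M,\Rsf N,\Pbb,\Uscr_A',\Uscr_B,\chi)$ by reordering the action set itself, $\Uscr_A'=\Rsf\Uscr_A$ (viewing $\Uscr_A$ as an ordered pair). Since the cost of an instance is defined through the indexing \eqref{eq:costmat}, this reordering makes $\ell'$ literally equal to $\ell$ as a function on the underlying elements, so the \emph{same} $Q$ satisfies $J(Q;D')=J(Q;D)$ for every $Q\in\Pscr(\Uscr|\Xi)$; and because the strategic classes $\Lscr$ and $\Qscr$ depend only on the underlying sets, not on their ordering, no class-preservation argument is needed — the advantage transfers immediately, with the converse from $\Rsf^2=\Ibf$. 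You instead keep the action sets fixed and push the relabelling into a strategy bijection $Q\mapsto Q'$, which obliges you to verify that the swap preserves $\Lscr$ (absorbing it into player $A$'s kernel) and $\Qscr$ (permuting player $A$'s projectors, which keeps idempotency and completeness). Your extra work buys something the paper's slicker argument leaves implicit: an explicit proof that $\Qscr$ is closed under relabelling of measurement outcomes, and an argument that does not lean on the freedom to choose the ordering of $\Uscr_A$ in the instance data. Both proofs are sound; the paper's is shorter precisely because it exploits that freedom, while yours is the more self-contained symmetry argument and would generalize directly to settings where the action sets are fixed once and for all.
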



Finally we have Proposition \ref{prop:exchange} asserting the equivalence of $\Cscr(N, M)$ and $\Cscr(M, N)$. This corresponds to relabelling the values of $\xi_W$.
\begin{proposition}
\label{prop:exchange}
\textit{(Exchange equivalence (relabelling of $\xi_W$))}
The following are equivalent:
\begin{enumerate}
	\item  $\Cscr(M, N)$ admits quantum advantage.
	\item  $\Cscr(N,M)$ admits quantum advantage.
\end{enumerate}\end{proposition}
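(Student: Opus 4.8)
The plan is to exhibit, for each advantageous instance of one class, an explicit instance of the other class whose expected cost is a fixed positive multiple of the original one, so that the strict gap $J^*_\Lscr > J^*_\Qscr$ is preserved. Since $(M,N)\mapsto(N,M)$ is an involution on the pair of classes, it suffices to prove a single implication, say $(1)\Rightarrow(2)$; applying it with the roles of $M$ and $N$ interchanged then yields the converse. The guiding idea is that exchanging $M$ and $N$ is precisely the relabelling $\xi_W\mapsto 1-\xi_W$, so the construction should relabel the prior accordingly.

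The construction: given an instance $D=(M,N,\Pbb,\Uscr_A,\Uscr_B,\chi)\in\Cscr(M,N)$ with $\chi>0$, define $D'=(N,M,\Pbb',\Uscr_A,\Uscr_B,\chi')\in\Cscr(N,M)$ by setting $\Pbb'(\xi_A,\xi_B,\xi_W):=\Pbb(\xi_A,\xi_B,1-\xi_W)$ and $\chi':=1/\chi$. Then $\Pbb'$ is a valid prior (it merely permutes the two values of $\xi_W$) and $\chi'\in(0,\infty)$, so $D'$ is a legitimate instance of $\Cscr(N,M)$. Substituting the cost rule \eqref{eq:costmat} into \eqref{eq:expcost} and splitting the sum over $\xi_W\in\{0,1\}$, I would check by direct computation that the $\xi_W=0$ term of $J(Q;D')$ reproduces the $\xi_W=1$ term of $J(Q;D)$, while the $\xi_W=1$ term of $J(Q;D')$ reproduces $1/\chi$ times the $\xi_W=0$ term of $J(Q;D)$; collecting terms gives $J(Q;D')=\tfrac1\chi J(Q;D)$ for every $Q\in\Pscr(\Uscr|\Xi)$. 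Since $1/\chi>0$, taking infima over any strategic class $S$ yields $J^*_S(D')=\tfrac1\chi J^*_S(D)$, so the strict inequality $J^*_\Lscr(D)>J^*_\Qscr(D)$ transfers verbatim to $J^*_\Lscr(D')>J^*_\Qscr(D')$, giving the quantum advantage in $\Cscr(N,M)$.

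The one place that needs care — and the main obstacle — is the asymmetry introduced by $\chi$: the $\xi_W=0$ slot of the cost carries $M$ with entries in $\{0,-1\}$, whereas the $\xi_W=1$ slot carries $\chi N$. A naive relabelling of $\xi_W$ therefore fails to land in the correct format unless $\chi$ is simultaneously inverted, and the scaling factor $1/\chi$ above is legitimate only when $\chi>0$. To close this gap I would first rule out $\chi=0$ for any advantageous instance: when $\chi=0$ the cost reduces to $J(Q;D)=\sum_{\xi_A,\xi_B}\Pbb(\xi_A,\xi_B,0)\sum_{u_A,u_B}[M]_{i+1\,j+1}Q(u_A,u_B|\xi_A,\xi_B)$, which depends on $M$ alone, and a single pair $(i^*,j^*)=\argmin_{i,j}[M]_{i+1\,j+1}$ minimises $M$ uniformly over all $(\xi_A,\xi_B)$. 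Hence the constant deterministic policy $\gamma_A\equiv i^*,\ \gamma_B\equiv j^*$, an element of $\Pi\subseteq\Lscr\subseteq\Qscr$, attains the centralised optimum, forcing $J^*_\Lscr(D)=J^*_\Qscr(D)$ and precluding any advantage. Thus every witness of a quantum advantage has $\chi>0$, and the construction of the preceding paragraph applies; what remains is only the routine verification of the identity $J(Q;D')=\tfrac1\chi J(Q;D)$.
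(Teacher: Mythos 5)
Your proof is correct and follows essentially the same route as the paper: the paper likewise maps $D=(M,N,\Pbb,\Uscr_A,\Uscr_B,\chi)$ to $D'=(N,M,\Pbb',\Uscr_A,\Uscr_B,1/\chi)$ with the prior's $\xi_W$-slices swapped, verifies $J(Q;D')=\tfrac{1}{\chi}J(Q;D)$ for every $Q$, transfers the strict gap $J(Q;D)-J^*_\Lscr(D)<0$ to $D'$, and obtains the converse from the involutive nature of the exchange. Your explicit elimination of the $\chi=0$ case is a small added point of rigor---the paper's proof silently assumes $\chi>0$ (consistent with \eqref{eq:supstruct}, though the formal class definition \eqref{eq:classdef} permits $\chi=0$)---but it does not change the substance of the argument.
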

\section{Main result}\label{sec:eliminate}
Let $V:=(M, N)$ and define actions $\Tsf,\Rsf,\Rsf',\Esf$ so that $\Tsf V:=(M\t, N\t)$, $\Rsf V=(\Rsf M, \Rsf N)$, $\Rsf'V:= (M\Rsf, N\Rsf)$ and $\Esf V:= (N, M)$. Let \begin{equation}\label{eq:groupactions}
	\Omega:=\{\Ibf, \Tsf, \Rsf, \Rsf', \Esf\},
\end{equation}
be the set of group actions with $\Ibf$ being the identity and denote by $(V; \Omega)$ the matrix-pairs generated by an arbitrary sequence of group actions on $V$ (technically, the \textit{orbit} of $V$ under $\Omega$). 
Then using Propositions \ref{prop:transpose}-\ref{prop:exchange}, then if $\Cscr(V)$ does not admit a quantum advantage, then $\Cscr(X)$ does not admit quantum advantage for all $X\in (V; \Omega)$. We henceforth use the notation
$$\Cscr((V; \Omega)):=\{\Cscr(X): X\in (V;\Omega)\},$$
and refer to this as the \textit{orbit} of the class $V$.
Following is the main theorem of this paper.
\begin{theorem}\label{thm:main}
	Consider the problem class superstructure defined in Definition~\ref{def:superstruc}. A problem class $\Cscr(A,B)$ in this superstructure admits a quantum advantage if and only if $(A,B) \in ((M,N);\Omega)$ where $(M,N)$ are either in the CAC form or in the  $\half$-CAC form.
\end{theorem}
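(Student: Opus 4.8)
The plan is to exploit the orbit structure supplied by Propositions~\ref{prop:transpose}--\ref{prop:exchange}: since admitting a quantum advantage is invariant along any orbit $((M,N);\Omega)$, it suffices to select one representative from each orbit of the $256$ classes under the group generated by $\{\Tsf,\Rsf,\Rsf',\Esf\}$ and decide the question representative-by-representative. Concretely, I would first enumerate the orbits. The diagonal action of row-swap, column-swap and transposition is the dihedral group $D_4$ acting identically on $M$ and $N$, and $\Esf$ (which commutes with it) adjoins the swap $M\leftrightarrow N$; together these collapse the $2^4\times 2^4$ pairs into a short list of representatives. The theorem then reduces to verifying that precisely the orbits of the CAC form~\eqref{eq:cacdef} and the $\half$-CAC form~\eqref{eq:halfcacdef} admit the advantage, and no others.

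For the ``only if'' direction I would eliminate every remaining representative using the collapse criterion recorded after~\eqref{eq:costheirarchy}: if $J^*_{\Lscr}(D)=J^*_{\Nscr\Sscr}(D)$ for all $D\in\Cscr(M,N)$, then the class admits no no-signalling advantage, hence no quantum advantage. Because $\Nscr\Sscr$ is the polytope whose $24$ vertices are enumerated in~\eqref{eq:localvertex}--\eqref{eq:nonlocalvertex} and $J(\cdot;D)$ is linear in $Q$, one has $J^*_{\Nscr\Sscr}(D)=\min$ over those vertices while $J^*_{\Lscr}(D)=\min$ over the $16$ local vertices. Thus for each representative I would express the cost of the $8$ non-local (PR-box) vertices $Q^{\alpha\beta\delta}$ as affine functions of the prior $\Pbb$ and of $\chi\ge 0$, and show they never strictly undercut the best local deterministic strategy. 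The structural reason this succeeds is precisely the \emph{absence of the coordination dilemma}: in every non-CAC/$\half$-CAC representative the reward cells of $M$ and $N$ are arranged so that some fixed local rule---a player pinning one action, or a single joint action simultaneously optimal for both values of $\xi_W$---already attains the pointwise centralised optimum. Where it is cleaner I would instead verify the stronger collapse $J^*_{\Lscr}(D)=J^{**}(D)$, which by~\eqref{eq:costheirarchy} kills both the no-signalling and the quantum advantage at once.

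For the ``if'' direction I need genuine quantum strategies, since a no-signalling gain alone is insufficient because $\Qscr\subsetneq\Nscr\Sscr$. For the CAC orbit this is already supplied by~\cite{deshpande2022quantum}, where an explicit entangled state with projective measurements strictly beats $J^*_{\Lscr}$ on the instance~\eqref{eq:egCACinstance}; orbit-invariance then settles the entire CAC orbit. For the $\half$-CAC orbit I would exhibit an analogous witness---a specific correlated prior $\Pbb$ and value of $\chi$---together with a CHSH-type entangled strategy, and compute both $J^*_{\Lscr}(D)$ (by minimising over the finitely many deterministic $\gamma_A,\gamma_B$) and the quantum value $J(Q;D)$ from~\eqref{eq:qcost}, certifying $J^*_{\Lscr}(D)>J(Q;D)$. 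The anti-diagonal $N$ common to both forms is exactly the CHSH-aligned structure that a Tsirelson-optimal measurement pair exploits, so the $\half$-CAC construction parallels the CAC one.

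The main obstacle is the exhaustive but well-organised elimination in the ``only if'' direction: the bookkeeping of reducing $256$ classes to orbit representatives without omission, and then certifying for each surviving representative that the PR-box vertices never help for \emph{any} $(\Pbb,\chi)$. The delicate cases are those lying just outside the CAC and $\half$-CAC patterns---for instance the $f(u_A^i,u_B^j)=i\vee j$ estimator class~\eqref{eq:orestimate}, whose $N$ differs from the $\half$-CAC $N$ in a single cell---where one must show that the extra reward cell destroys the dilemma and restores $J^*_{\Lscr}=J^*_{\Nscr\Sscr}$. The equivalences of Section~\ref{sec:equivalences} are what render this finite case analysis tractable.
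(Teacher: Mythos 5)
Your proposal follows essentially the same route as the paper: orbit reduction via Propositions~\ref{prop:transpose}--\ref{prop:exchange}, elimination of all remaining classes by comparing the eight non-local vertices \eqref{eq:nonlocalvertex} of $\Nscr\Sscr$ against local deterministic policies (legitimate because $J(\cdot;D)$ is linear in $Q$, so both infima are attained at vertices), and explicit quantum witnesses for the two surviving orbits --- CAC from \cite{deshpande2022quantum}, and $\half$-CAC by a concrete entangled strategy, which the paper constructs in Section~\ref{sec:1/2cacpoc} (prior \eqref{eq:1/2cacprior}, $\chi=2$, quantum cost $(-7-3\sqrt{3})/10 < -6/5 = J^*_{\Lscr}(D)$). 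The paper organizes the elimination by the overlapping/achiral/chiral trichotomy within each $\Cscr_{mn}$ and counting arguments rather than by listing orbit representatives, but that is the same bookkeeping in different clothes.

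There is, however, one genuine flaw in your stated justification, and it sits exactly on the delicate cases. You claim that in every eliminated representative ``some fixed local rule \ldots already attains the pointwise centralised optimum,'' i.e.\ $J^*_{\Lscr}(D)=J^{**}(D)$. That collapse holds only for the overlapping classes and those with $m=0$, $n=0$, or $m+n\geq 5$ (Lemma~\ref{prop:centralproblems}, Corollaries~\ref{cor:nonull} and~\ref{cor:pigeonhole}). It is false for the non-overlapping eliminated classes: for the achiral $1$-$1$ pair \eqref{eq:11ma} with $\chi=1$ and a prior in which $\xi_W$ is uniform, $\xi_A=\xi_W$, and $\xi_B$ is independent noise, one gets $J^{**}(D)=-1$ while $J^*_{\Lscr}(D)=-\half$, a strict centralisation advantage. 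For these classes the paper proves only the weaker (but sufficient) collapse $J^*_{\Nscr\Sscr}(D)=J^*_{\Lscr}(D)$, by bounding each PR-box vertex cost below by the average --- hence the minimum --- of the costs of two explicit deterministic policies (Propositions~\ref{prop:11eliminate}, \ref{prop:eliminatecomp}, \ref{prop:1-2Celiminate}, \ref{prop:2-2Celiminate}); for $\Cscr_{13}$ this requires an exact decomposition of each non-local vertex as an average of two local vertices, which the paper certifies computationally in \cite{c13pylink}, so it is not a by-inspection step. Your operational plan (affine-in-$(\Pbb,\chi)$ comparison of PR-box vertices against local vertices) is precisely this argument and does go through; but had you executed the proof through the claimed centralised collapse, it would break on every achiral class.
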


Thus a problem class admits a quantum advantage if and only if it lies in the orbit of the CAC class or the $\half$-CAC class. We now proceed to prove this claim. In our earlier paper~\cite{deshpande2022quantum}, we showed that the CAC class admits a quantum advantage. In Section~\ref{sec:1/2cacpoc} we show that the $\half$-CAC class admits a quantum advantage. In the sections below we systematically eliminate all classes not in the orbit of the CAC and $\half$-CAC class to show Theorem~\ref{thm:main}. 
\def\Crsfs{\mathscr{C}}

\begin{definition}
	We call a problem class $\Cscr(M,N)\in\Crsfs$ an $m$-$n$ class if the number of non-zero entries in $M$ is $m\in\{0,1,2,3,4\}$ and the number of non-zero entries in $N$ is $n\in\{0,1,2,3,4\}$ and call $V=(M, N)$ an $m$-$n$ tuple. Let $\Cscr_{mn} \subset \Crsfs$ denote the set of all $m$-$n$ problem classes. 
\end{definition}
Notice that $|\Cscr_{mn}|=\prescript{4}{}C_m \prescript{4}{}C_n$ and $\{\Cscr_{mn}\}_{m,n}$ defines a partition on $\Crsfs$. For $(M, N)$ in the CAC form, $\Cscr((M, N);\Omega)\subset \Cscr_{22}$. Similarly, for $(M, N)$ in the $\half$-CAC form, $\Cscr((M, N);\Omega)\in \Cscr_{12}\cup \Cscr_{21}$. To proceed with our systematic elimination, we eliminate $\Cscr_{mn}$ for all pairs $(m, n)\notin \{(2,2), (1,2), (2,1)\}$ through a pigeonhole principle-based argument; this is done in Section \ref{sec:cent}. In the subsequent sections, we eliminate classes in $\Cscr_{22}$, $\Cscr_{12}$ and $\Cscr_{21}$ that do not belong in the orbit of the CAC or the $\half$-CAC class.

\subsection{Problem classes with no centralisation advantage} \label{sec:cent}
Since $\ell$ takes only binary values, if there exists a pair of actions $u_A^*, u_B^*$ such that $\ell(u_A^*, u_B^*,\xi_W)$ is non zero for both values of $\xi_W$, then the strategy 
\begin{equation}\label{eq:ccq}
	\bar{Q}(u_A,u_B|\xi_A,\xi_B) \equiv  \delta_{u_A=u_A^*,u_B=u_B^*}
\end{equation}
 which lies in $\Pi$ is optimal over $\Pscr(\Uscr|\Xi)$. In other words, the problem admits no centralisation advantage. The following definition and the proposition that succeeds formalises this line of arguments.

\begin{definition}
We call a pair $V=(M, N)$ overlapping if $\exists i, j $ such that $ [M]_{ij}=[N]_{ij}=-1$. Denote the set of all classes $\Cscr(V)$ where $V$ is overlapping by $\Cscr^o$,
\begin{equation}\label{eq:codef}
	\Cscr^o:=\{\Cscr(M, N)|\exists i, j: [M]_{ij}=[N]_{ij}=-1\}.
\end{equation}
\end{definition}
\begin{lemma}\label{prop:centralproblems}
If $\Cscr(M,N)\in\Cscr^o$, then $\Cscr(M,N)$ does not admit a centralisation, and hence a quantum advantage. 
\end{lemma}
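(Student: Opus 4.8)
The plan is to exhibit, for an arbitrary overlapping instance $D\in\Cscr(M,N)$, a single deterministic strategy in $\Pi$ that simultaneously attains the pointwise minimum of the cost for every realisation of $(\xi_A,\xi_B,\xi_W)$, and hence is optimal even in the centralised setting $\Pscr(\Uscr|\Xi)$. Once such a strategy is found, the absence of a centralisation advantage is immediate, and the quantum case follows from the implication recorded after the advantage definitions.

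First I would establish an instance-independent pointwise lower bound. For any $Q\in\Pscr(\Uscr|\Xi)$, since $Q(\cdot|\xi_A,\xi_B)$ is a probability distribution on $\Uscr$, the inner sum $\sum_{u_A,u_B}\ell(u_A,u_B,\xi_W)Q(u_A,u_B|\xi_A,\xi_B)$ is a convex combination of the values $\{\ell(u_A,u_B,\xi_W)\}$ and is therefore bounded below by $\min_{u_A,u_B}\ell(u_A,u_B,\xi_W)$. Substituting into \eqref{eq:expcost} gives
\[
J(Q;D)\ \geq\ \sum_{\xi_A,\xi_B,\xi_W}\Pbb(\xi_A,\xi_B,\xi_W)\,\min_{u_A,u_B}\ell(u_A,u_B,\xi_W),
\]
and the right-hand side is independent of $Q$, so it lower-bounds $J^{**}(D)$. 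By \eqref{eq:costmat} and the binary range of $M,N$, this minimum equals $-1$ at $\xi_W=0$ whenever $M$ has a $-1$ entry, and $-\chi$ at $\xi_W=1$ whenever $N$ has a $-1$ entry.

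Next I would invoke the overlapping hypothesis. Since $\Cscr(M,N)\in\Cscr^o$, there is a fixed index pair $(i^*,j^*)$ with $[M]_{i^*j^*}=[N]_{i^*j^*}=-1$. The constant strategy $\bar{Q}(u_A,u_B|\xi_A,\xi_B)\equiv\delta_{u_A=u_A^{i^*},u_B=u_B^{j^*}}$ of \eqref{eq:ccq} lies in $\Pi$ and, by \eqref{eq:costmat}, yields $\ell(u_A^{i^*},u_B^{j^*},0)=-1$ and $\ell(u_A^{i^*},u_B^{j^*},1)=-\chi$. As $-1$ is the least attainable value of $M$ and $-\chi$ the least of $\chi N$, these are exactly the two pointwise minima from the previous step. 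Hence $\bar{Q}$ attains the lower bound for every $(\xi_A,\xi_B,\xi_W)$, so $J(\bar{Q};D)=J^{**}(D)$.

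Finally, since $\bar{Q}\in\Pi\subseteq\Lscr$, we obtain $J^*_{\Lscr}(D)\leq J(\bar{Q};D)=J^{**}(D)$, while \eqref{eq:costheirarchy} supplies the reverse inequality; hence $J^*_{\Lscr}(D)=J^{**}(D)$, so $D$ admits no centralisation advantage. As $D$ was arbitrary in $\Cscr(M,N)$ and $(i^*,j^*)$ depends only on $(M,N)$, the entire class admits no centralisation advantage, and by the implication noted after the advantage definitions this precludes a quantum advantage. There is no genuine obstacle beyond bookkeeping; the one point needing care is that the chosen index $(i^*,j^*)$ works uniformly over all instances of the class, \ie for every $\Pbb$ and every $\chi\geq 0$, which holds precisely because overlap is a property of $(M,N)$ alone.
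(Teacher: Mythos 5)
Your proposal is correct and follows essentially the same route as the paper's proof: both identify the overlapping index pair as a pointwise minimiser of $\ell$ over actions for every value of $\xi_W$, bound $J(Q;D)$ below for arbitrary $Q\in\Pscr(\Uscr|\Xi)$ by the expectation of that pointwise minimum, and observe that the constant deterministic strategy $\bar{Q}$ of \eqref{eq:ccq} attains this bound, forcing $J^*_{\Lscr}(D)=J^{**}(D)$ and hence ruling out any centralisation, no-signalling, or quantum advantage. The only difference is presentational (you state the generic lower bound before invoking the overlap hypothesis, while the paper does the reverse), so there is nothing further to reconcile.
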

\begin{proof}
Let $\Cscr(M,N) \in \Cscr^o$ and let $D \in \Cscr(M,N).$
By \eqref{eq:codef}, there exists $u_A^* \in \Uscr_A,u^*_B \in \Uscr_B$ such that
\begin{equation}
	\ell({u}^{*}_A, {u}^{*}_B, \xi_W)\leq \ell(u_A, u_B, \xi_W) \quad \forall u_A,u_B,\xi_W. \label{eq:generalcentral}
\end{equation}
Thus for any $Q\in\Pscr(\Uscr|\Xi)$,
\begin{align*}
&J(Q; D)\\
&\geq \sum_{\xi_{W}} \Pbb(\xi_W|\xi_A, \xi_B)\sum_{\xi_A, \xi_B}\Pbb(\xi_A, \xi_B)\min_{u_A, u_B}\ell(u_A, u_B, \xi_W)\\
&= J(\bar{Q};D),
\end{align*}
where $\bar{Q}$ is as defined in \eqref{eq:ccq}. Since $\bar{Q}\in \Pi,$
we get $J^{**}(D)=J_\Lscr^*(D)$ and the proposition is established.
\end{proof}

Although \eqref{eq:codef} gives a tractable definition of $\Cscr^o$, it is not straightforward to exhaustively enumerate subclasses in $\Cscr^o.$ Hence we will use Lemma~\ref{prop:centralproblems} as an enabling lemma to eliminate some subclasses $\Cscr_{mn} \in \Crsfs.$
Following are two results that accomplish this. 
\begin{corollary}
	Let either $m=0$ or $n=0$ and let $\Cscr(M,N) \in \Cscr_{mn}$. Then $\Cscr(M, N)$ does not admit a quantum advantage.\label{cor:nonull}
\end{corollary}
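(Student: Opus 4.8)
The plan is to reduce Corollary~\ref{cor:nonull} to Lemma~\ref{prop:centralproblems} by showing that whenever $M$ or $N$ is the zero matrix, the pair $(M,N)$ is necessarily overlapping---or else the problem is degenerate in a way that rules out any advantage for a trivial reason. First I would handle the case $n=0$; the case $m=0$ then follows immediately from the Exchange equivalence (Proposition~\ref{prop:exchange}), which swaps the roles of $M$ and $N$, so it suffices to argue one of the two. So suppose $N$ is the all-zeros matrix, meaning $\ell(u_A,u_B,\xi_W=1)=0$ for every action pair. The cost then depends only on $M$ (through $\xi_W=0$), and the key observation is that the expected cost factors so that the players face, in effect, a problem with no coordination dilemma: there is a single action pair minimizing $\ell$ simultaneously for both values of $\xi_W$.

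The main step is to exhibit a globally dominant action pair. If $N=\mathbf{0}$, then for any entry $(i,j)$ we have $[N]_{ij}=0$, so the only way to satisfy the overlapping condition $[M]_{ij}=[N]_{ij}=-1$ is to find an entry where $[M]_{ij}=-1$ and $[N]_{ij}=-1$; but since all entries of $N$ are $0$, this can never hold, so $(M,\mathbf{0})$ is \emph{not} overlapping and Lemma~\ref{prop:centralproblems} does not apply directly. Hence the cleaner route is a direct argument: with $N=\mathbf 0$, pick $(u_A^*,u_B^*)$ to be any action pair achieving $\min_{i,j}[M]_{ij}$. Then $\ell(u_A^*,u_B^*,0)\le \ell(u_A,u_B,0)$ for all $(u_A,u_B)$ by choice, and $\ell(u_A^*,u_B^*,1)=0=\ell(u_A,u_B,1)$ for all $(u_A,u_B)$ because $N=\mathbf 0$. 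Thus \eqref{eq:generalcentral} holds, and the constant deterministic strategy $\bar Q$ from \eqref{eq:ccq} is centrally optimal, giving $J^{**}(D)=J^*_\Lscr(D)$ for every $D\in\Cscr(M,N)$. By the cost hierarchy \eqref{eq:costheirarchy}, this forces $J^*_\Lscr(D)=J^*_\Qscr(D)$, so no quantum advantage exists.

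For the $m=0$ case, I would simply note that $\Cscr(\mathbf 0, N)$ lies in the orbit of $\Cscr(N,\mathbf 0)$ under the exchange action $\Esf$ of Proposition~\ref{prop:exchange}, and the latter has $n=0$ and hence admits no quantum advantage by the argument above; the orbit structure then transfers the conclusion. Alternatively one can rerun the identical dominance argument with the roles of $\xi_W=0$ and $\xi_W=1$ swapped. Either way, the corollary follows directly by specializing the centralisation argument already established in Lemma~\ref{prop:centralproblems}.

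The only subtlety---and the point I expect needs the most care in writing---is that Lemma~\ref{prop:centralproblems}'s hypothesis (overlapping) is \emph{not} literally met when $N=\mathbf 0$, so the corollary is not an instant corollary of the lemma's statement but rather of its \emph{proof technique}: the existence of a globally dominant action pair satisfying \eqref{eq:generalcentral}. I would therefore phrase the proof as re-invoking the inequality \eqref{eq:generalcentral} with the dominant pair furnished by $N=\mathbf 0$ (trivially dominant in the $\xi_W=1$ layer) together with a minimizing entry of $M$ (dominant in the $\xi_W=0$ layer), rather than claiming $(M,\mathbf 0)\in\Cscr^o$. This makes the logical dependence on Lemma~\ref{prop:centralproblems} precise and avoids the false assertion that a zero layer produces an overlap.
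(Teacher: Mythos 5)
Your proof is correct and follows essentially the same route as the paper: the paper likewise establishes \eqref{eq:generalcentral} directly (a minimizing entry of the non-null matrix is dominant, the null layer being trivially tied everywhere) and then reuses the argument of Lemma~\ref{prop:centralproblems} rather than its statement. Your explicit observation that $(M,\mathbf{0})$ is \emph{not} overlapping---so the corollary rests on the lemma's proof technique, not its hypothesis---is precisely the reading the paper's terse ``the rest follows as in'' intends, and your handling of $m=0$ via Proposition~\ref{prop:exchange} (or by symmetry) is an inessential variation.
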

\begin{proof}
	If one of the matrices $M$ and $N$ is null, then there exist $u_A^*,u_B^*$ such that 
 \eqref{eq:generalcentral} holds. The rest follows as in Proposition \ref{prop:centralproblems}. 
\end{proof}
\begin{corollary}
Let $\Cscr(M,N)$ be such that $m+n\geq 5$. Then $\Cscr(M, N)$ does not admit quantum advantage.\label{cor:pigeonhole}
\end{corollary}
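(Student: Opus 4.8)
The plan is to reduce Corollary~\ref{cor:pigeonhole} directly to Lemma~\ref{prop:centralproblems} by a simple counting argument on the supports of the two cost matrices. First I would note that each of $M$ and $N$ is a $2\times 2$ matrix with entries in $\{0,-1\}$, so their ``active'' positions (those equal to $-1$) are subsets of the index set $\{1,2\}\times\{1,2\}$, which has exactly four elements. Writing $S_M := \{(i,j): [M]_{ij}=-1\}$ and $S_N := \{(i,j): [N]_{ij}=-1\}$, the definition of an $m$-$n$ tuple gives $|S_M| = m$ and $|S_N| = n$.

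The key step is the pigeonhole observation. Since $S_M, S_N \subseteq \{1,2\}\times\{1,2\}$ and $|S_M| + |S_N| = m+n \geq 5 > 4$, the sets $S_M$ and $S_N$ cannot be disjoint; equivalently, $|S_M \cap S_N| \geq |S_M| + |S_N| - 4 \geq 1$. Hence there exists an index $(i,j)$ with $[M]_{ij} = [N]_{ij} = -1$, which is precisely the overlapping condition of \eqref{eq:codef}. Therefore $\Cscr(M,N) \in \Cscr^o$.

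Finally I would invoke Lemma~\ref{prop:centralproblems}: every overlapping class fails to admit a centralisation advantage, and hence, by the implication recorded after the advantage definitions (no centralisation advantage forces equality throughout the chain \eqref{eq:costheirarchy}), it also fails to admit a quantum advantage. This closes the argument.

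There is essentially no hard step here; the entire content is the count $|S_M \cap S_N| \geq 1$ together with the already-established Lemma~\ref{prop:centralproblems}. The only point requiring care is the bookkeeping: one must confirm that $m$ and $n$ genuinely count the $-1$ entries (so the supports have the claimed cardinalities) and that ``overlapping'' is indeed defined by a shared $-1$ position, so that the pigeonhole conclusion lands exactly in the hypothesis of the lemma.
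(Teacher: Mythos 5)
Your proposal is correct and is precisely the paper's argument: the paper's one-line proof invokes the pigeonhole principle to conclude $\Cscr(M,N)\in\Cscr^o$ and then applies Lemma~\ref{prop:centralproblems}, exactly as you do. Your write-up merely makes the counting step $|S_M\cap S_N|\geq m+n-4\geq 1$ explicit, which the paper leaves implicit.
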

\begin{proof}
If $m+n\geq 5$, then by the pigeonhole principle, $\Cscr(M, N)\in \Cscr^o$. 
\end{proof}

\subsection{Elimination of other problem classes}
Corollaries \ref{cor:nonull} and \ref{cor:pigeonhole} help eliminate the possibility of a quantum advantage for all $m$-$n$ classes where $m+n\geq 5$ or $\min(m, n)=0$. Thus, out of the $256$ classes in $\Crsfs$, we have eliminated
$\sum_{m+n\geq 5} {^4C_m^4C_n}+\sum_{\min(m, n)=0} {^4C_m^4C_n}=93+31=124$
classes. We now scan through remaining elements in $\Crsfs$, namely,  $\Cscr_{11}, \Cscr_{12}, \Cscr_{21}, \Cscr_{22},\Cscr_{13},\Cscr_{31}$. 

For $i\in\{1, 2\}$, let $-i$ denote the element in $\{1, 2\}\setminus\{i\}$. 
\begin{definition}
We call $V = (M,N)$ and the class $\Cscr(V)$ achiral if $V$ is non-overlapping and $\exists i, j $ such that $ [M]_{ij}=[N]_{-i-j}=-1$. We call a $V$ and the class $\Cscr(V)$ chiral if $V$ is non-overlapping and not achiral.
\end{definition}

\begin{lemma}\label{lem:nooverlapgen}
\begin{enumerate}
	\item  $V$ is overlapping if and only if $V'$ is overlapping for all $V'\in(V;\Omega)$.
	\item  $V$ is achiral if and only if $V'$ is achiral for all $V'\in (V;\Omega)$.
	
\end{enumerate}
\end{lemma}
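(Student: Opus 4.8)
The plan is to reduce both biconditionals to a single invariance statement: that the properties \emph{overlapping} and \emph{achiral} are each constant along the orbit $(V;\Omega)$. One direction of each biconditional is immediate, since $\Ibf\in\Omega$ forces $V\in(V;\Omega)$, so ``all $V'\in(V;\Omega)$ have the property'' trivially implies ``$V$ has the property.'' For the forward direction it suffices to show that each of the four generating actions $\Tsf,\Rsf,\Rsf',\Esf$ maps a tuple having the property to a tuple having the property. Because each generator is an involution ($\Tsf^2=\Rsf^2=\Rsf'^2=\Esf^2=\Ibf$), preservation in one direction already yields the equivalence ``$W$ has the property iff $gW$ has it'' for every generator $g$; an induction along the finite sequence of actions defining the orbit then shows the property is constant on $(V;\Omega)$.

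For part (1) I would verify generator-wise that the existence of a common position $(i,j)$ with $[M]_{ij}=[N]_{ij}=-1$ is preserved, simply by tracking where that position is sent. Transposition sends the witnessing position $(i,j)$ to $(j,i)$ in $(M\t,N\t)$; the row swap $\Rsf$ sends it to $(-i,j)$ (using $[\Rsf M]_{-i,\,j}=[M]_{ij}$), the column swap $\Rsf'$ to $(i,-j)$, and the exchange $\Esf$ leaves the overlapping condition literally unchanged, since it is symmetric in $M$ and $N$. In each case the transformed tuple retains a common $-1$ entry, so overlapping---and, by involutivity, also non-overlapping---is orbit-invariant.

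For part (2), non-overlapping is already handled by part (1), so it remains to track the achiral witness: a position $(i,j)$ with $[M]_{ij}=[N]_{-i,-j}=-1$. I would again check each generator, choosing the transformed witness by undoing the relevant index permutation and invoking $-(-i)=i$. Transposition uses the witness $(j,i)$; the row swap uses $(-i,j)$ and the column swap $(i,-j)$, in each case confirming that both the $M$-entry and the flipped $N$-entry remain $-1$. The step warranting genuine care is the exchange $\Esf:(M,N)\mapsto(N,M)$, because the achiral condition is not symmetric in $M$ and $N$ in the naive sense; here the witness $(-i,-j)$ works, since $[N]_{-i,-j}=-1$ directly and $[M]_{-(-i),-(-j)}=[M]_{ij}=-1$. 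This ``symmetry up to the index flip'' is precisely the feature that singles out achirality, rather than a cruder notion, as the orbit-invariant property.

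The computations themselves are elementary, and the only real obstacle is bookkeeping with the double flip $-(-i)=i$. I expect the exchange action in part (2) to be the subtle case, as it is the one where the roles of $M$ and $N$ are swapped while the achiral witness must be re-expressed through that double flip. Once all four generators are confirmed to preserve both properties, the involutivity and the orbit structure deliver the two biconditionals simultaneously.
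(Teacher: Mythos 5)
Your proposal is correct and follows essentially the same route as the paper's proof: generator-wise verification that each action in $\Omega$ preserves the witnessing entry (tracking the position under row/column swaps, transposition, and exchange), combined with the fact that all generators are involutions to obtain the converse direction. You are in fact somewhat more complete than the paper, which verifies only $\Rsf$ for achirality and dismisses the remaining generators---including the genuinely subtle exchange case $\Esf$, where the witness $(-i,-j)$ and the double flip $-(-i)=i$ are needed---with ``can be proved similarly.''
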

\begin{proof}
(1) It is easy to see that by inspection all actions in $\Omega$ map an overlapping pair $(M,N)$ to another overlapping pair. Moreover, since $M,N$ are $2\times 2$ matrices, all actions $\Rsf,\Rsf',\Tsf,\Esf$ are involutions, \ie, when applied twice, are equivalent to $\bfI$. In other words if $V' \in (V;\Omega)$, then by a suitable application actions, one can map $V'$ to back to $V$, whereby if $V'$ is overlapping, then so must be $V.$ \\
(2) This part follows in a similar manner as (1). 
Suppose that $V$ is achiral. Then owing to part (a), every $V'$ in the orbit $(V;\Omega)$ is non-overlapping. We will show that the action $\Rsf$ preserves achirality of $V$; this can be shown for other actions can be proved similarly. 
Let $i, j$ be such that $[M]_{ij}=[N]_{-i-j}=-1$. Then,  $[\Rsf M]_{-ij}=[M]_{ij}=[N]_{-i-j}=[\Rsf N]_{i-j}=-1$, implying that $\Rsf V$ is achiral. Thus, the orbit $(V;\Omega)$ is achiral. Again, using that the actions in $\Omega$ are involutions we get that if any $V'\in (V,\Omega)$ is achiral, then so is $V.$
\end{proof}

Our elimination procedure for $\Cscr_{mn}$ can be described as follows. We define $\Cscr_{mn}^o=\Cscr_{mn}\cap \Cscr^o$ as the collection of all overlapping $m$-$n$ classes. Observe that 
\begin{equation}\label{eq:overlap}
	|\Cscr_{mn}^o| = \comb{4}{m} \times (\sum_{k=1}^{n}\comb{m}{k}\comb{4-m}{n-k} ),
\end{equation}
since we have $\comb{4}{m}$ choices for a `$-1$' in $M$ following which we have $\comb{m}{k}\comb{4-m}{n-k}$ choices for $k$ overlapping $-1$'s in $N$ and $\comb{4-m}{n-k}$ for the remaining $(n-k)$ nonoverlapping $-1$'s.
We then explicitly specify a chiral $V_c=(M_c, N_c)$ and an achiral $V_a=(M_a, N_a)$ and define $\Cscr_{mn}^c:=\Cscr((V_c;\Omega))\cap \Cscr_{mn}$, $\Cscr_{mn}^a:=\Cscr((V_a;\Omega))\cap\Cscr_{mn}$. Following Lemma \ref{lem:nooverlapgen}, such a specification ensures that all classes in $\Cscr_{mn}^c$ are chiral and all those in $\Cscr_{mn}^a$ are achiral so that $\Cscr_{mn}^o,\Cscr_{mn}^a,\Cscr_{mn}^c$ are mutually disjoint.
We then establish that our choice of $V_c$, $V_a$ ensures that $\Cscr^o_{mn}, \Cscr^a_{mn}$ and $\Cscr^c_{mn}$ exhaust $\Cscr_{mn},$ whereby these constitute a partition of $\Cscr_{mn}$.
We then examine $\Cscr(V_c)$ and $\Cscr(V_a)$ and eliminate those that do not admit a quantum advantage.

\subsubsection{Elimination of 1-1 problem class $\Cscr_{11}$}
Consider $\Cscr_{11}$ and notice $|\Cscr_{11}|=\prescript{4}{}C_1\prescript{4}{}C_1=16$. Define 
$\Cscr_{11}^o=\Cscr_{11}\cap\Cscr^o.$
Define the following achiral pair $V_a=(M_a, N_a)$,
\begin{equation}\label{eq:11ma}
 M_a:=\begin{pmatrix}
-1&0\\
0&0
\end{pmatrix},\quad N_a:=\begin{pmatrix}
0&0\\
0&-1
\end{pmatrix},
\end{equation}
and let $\Cscr_{11}^a:=  \Cscr((V^a;\Omega))\cap \Cscr_{11}$. Observe that,
\begin{equation}\label{eq:c11adef}
\Cscr_{11}^a=\{\Cscr(V_a), \Cscr(\Rsf V_a),
 \Cscr(V_a\Rsf), \Cscr(\Rsf V_a\Rsf)\},
\end{equation}
so that $|\Cscr_{11}^a|=4$. Now take the chiral pair $V_c=(M_c, N_c),$
\begin{equation}\label{eq:11mc}
M_c:=\begin{pmatrix}
-1&0\\
0&0
\end{pmatrix}, N_c:=\begin{pmatrix}
0&-1\\
0&0
\end{pmatrix},
\end{equation}
and let $\Cscr_{11}^c := \Cscr((V^c;\Omega))\cap \Cscr_{11}$. It is easy to verify that
\begin{align}\label{eq:c11cdef}
\Cscr_{11}^c=&\{\Cscr(V_c), \Cscr(\Rsf V_c), \Cscr(V_c\Rsf), \Cscr(\Rsf V_c\Rsf), \Cscr(\Tsf V_c),\nonumber\\
&\Cscr(\Rsf \Tsf V_c), \Cscr(\Tsf V_c\Rsf), \Cscr(\Rsf \Tsf V_c\Rsf)\},
\end{align}
whereby $|\Cscr_{11}^c|=8 = |\Cscr_{11}|-|\Cscr_{11}^o|-|\Cscr_{11}^a|.$ Thus $\Cscr_{11}^c,\Cscr_{11}^o,\Cscr_{11}^a$ is a partition of $\Cscr_{11}.$
The following proposition eliminates $\Cscr_{11}$ by elimination of each of the elements in this partition.
\begin{proposition} \label{prop:11eliminate} $\Cscr_{11}$ does not admit a quantum advantage.
\end{proposition}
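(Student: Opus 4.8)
The plan is to eliminate each block of the partition $\Cscr_{11}=\Cscr_{11}^o\,\sqcup\,\Cscr_{11}^a\,\sqcup\,\Cscr_{11}^c$ separately. The overlapping block is immediate: every class in $\Cscr_{11}^o$ is overlapping, so Lemma~\ref{prop:centralproblems} already rules out a centralisation advantage, and hence a quantum advantage. For the other two blocks, the orbit equivalences (Propositions~\ref{prop:transpose}--\ref{prop:exchange}), together with Lemma~\ref{lem:nooverlapgen} and the orbit construction of $\Cscr_{11}^a,\Cscr_{11}^c$, reduce the task to the two generating representatives $\Cscr(V_a)$ and $\Cscr(V_c)$: if neither admits a quantum advantage, then no class in $\Cscr_{11}^a\cup\Cscr_{11}^c$ does.

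For each representative I would prove the stronger statement that there is \emph{no no-signalling advantage}, since $J^*_{\Lscr}(D)=J^*_{\Nscr\Sscr}(D)$ forces $J^*_{\Lscr}(D)=J^*_{\Qscr}(D)$. Fix an instance $D$ and note that minimising $J(\cdot;D)$ is the same as maximising the linear functional
\[
F(Q)=\sum_{\xi_A,\xi_B}\Big(w^0_{\xi_A\xi_B}\,Q(r_0\,|\,\xi_A,\xi_B)+w^1_{\xi_A\xi_B}\,Q(r_1\,|\,\xi_A,\xi_B)\Big),
\]
where $w^0_{\xi_A\xi_B}=\Pbb(\xi_A,\xi_B,0)\ge 0$, $w^1_{\xi_A\xi_B}=\chi\,\Pbb(\xi_A,\xi_B,1)\ge 0$, and $r_0,r_1\in\Uscr$ are the two action pairs carrying the single $-1$ of $M$ and of $N$ respectively (namely $(u_A^0,u_B^0),(u_A^1,u_B^1)$ for $V_a$, and $(u_A^0,u_B^0),(u_A^0,u_B^1)$ for $V_c$). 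Since $F$ is linear and $\Nscr\Sscr$ is the polytope whose $24$ vertices are enumerated in \eqref{eq:localvertex}--\eqref{eq:nonlocalvertex}, the maximum of $F$ over $\Nscr\Sscr$ is attained at a vertex; the $16$ local vertices lie in $\Pi\subset\Lscr$ and therefore never exceed the local optimum, so only the $8$ non-local vertices \eqref{eq:nonlocalvertex} remain to be controlled.

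The crux is precisely this vertex bound. At any vertex of the form \eqref{eq:nonlocalvertex}, for each input $(\xi_A,\xi_B)$ the conditional law is supported, with mass $\half$ on each point, on a single parity diagonal $\{a\oplus b=c\}$; hence each of $Q(r_0\,|\,\cdot)$ and $Q(r_1\,|\,\cdot)$ is either $0$ or $\half$, and the contribution of that input to $F$ is at most $\half(w^0_{\xi_A\xi_B}+w^1_{\xi_A\xi_B})$ — uniformly, and regardless of whether $r_0,r_1$ share a diagonal (as for $V_a$) or lie on opposite diagonals (as for $V_c$). Summing over inputs gives $F\le\half\Sigma$ with $\Sigma:=\sum_{\xi_A,\xi_B}(w^0_{\xi_A\xi_B}+w^1_{\xi_A\xi_B})$ at every non-local vertex. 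On the other hand, the two deterministic product strategies that always play $r_0$ and always play $r_1$ lie in $\Pi$, and their equal mixture, an element of $\Lscr=\conv(\Pi)$, attains exactly $F=\half\Sigma$. Thus the local optimum already meets the upper bound on the non-local vertices, so $\max_{\Nscr\Sscr}F=\max_{\Lscr}F$, i.e. $J^*_{\Nscr\Sscr}(D)=J^*_{\Lscr}(D)$, and no quantum advantage can appear in $\Cscr(V_a)$ or $\Cscr(V_c)$. The main obstacle is exactly the recognition that a no-signalling (PR-type) vertex can recover at most half of the total reward weight $\Sigma$ on any single input, a quantity a trivial classical coin-flip between the two pure strategies already achieves; once this is in place, the equivalences and Lemma~\ref{prop:centralproblems} finish the argument as bookkeeping.
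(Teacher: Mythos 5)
Your proposal is correct and follows essentially the same route as the paper: the same partition $\Cscr_{11}=\Cscr_{11}^o\sqcup\Cscr_{11}^a\sqcup\Cscr_{11}^c$, elimination of the overlapping block via Lemma~\ref{prop:centralproblems}, reduction to the representatives $V_a,V_c$ via the orbit equivalences, and the same key bound that every non-local no-signalling vertex costs at least the average of the two constant deterministic policies playing the reward cells, a value the 50/50 classical mixture in $\Lscr=\conv(\Pi)$ attains. The only difference is presentational: where the paper verifies this vertex bound by an explicit Boolean (XOR) expansion carried out separately for $V_a$ and $V_c$, you derive it uniformly from the observation that each PR-type vertex puts mass $1/2$ on a single parity diagonal per input, so each reward cell receives at most $1/2$ --- a cleaner packaging of the same computation.
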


\begin{proof} $\Cscr_{11}^o$ does not admit a quantum advantage 
since $\Cscr_{11}^o\subset \Cscr^o$. 
We now show the same for $\Cscr_{11}^a.$ For an instance $D\in\Cscr(M_a, N_a)$, for $(M_a,N_a)$ as defined in \eqref{eq:11ma}, note that  $\ell(u_A^0,u_B^0,0)=-1$, $\ell(u_A^1, u_B^1, 1)=-\chi$ and $\ell(.)\equiv 0$ otherwise. Now consider deterministic policies $\hat{\gamma},\overline{\gamma}:$ $$ \hat{\gamma}_A(\xi_A)\equiv u_A^0, \hat{\gamma}_B(\xi_B)\equiv u_B^0 \aur \overline{\gamma}_A(\xi_A)\equiv u_A^1,\overline{\gamma}_B(\xi_B)\equiv u_B^1.$$
It is easy to evaluate,
\[ J(\pi_{\hat{\gamma}};D) = - \Pbb(\xi_w=0), \quad J(\pi_{\overline{\gamma}};D)=-\chi \Pbb(\xi_W=1) \]
Now consider for a no-signalling vertex $Q^{\alpha\beta\delta}\in\Nscr\Sscr$ (recall \eqref{eq:nonlocalvertex}), 
\begin{align*}
& J(Q^{\alpha\beta\delta}; D)
=-\sum_{\xi_A, \xi_B}(\Pbb(\xi_A,\xi_B,0)Q^{\alpha\beta\delta}(u_A^0,u_B^0|\xi_A,\xi_B)\\
&\qquad\qquad +\chi\Pbb(\xi_A,\xi_B,1)Q^{\alpha\beta\delta}(u_A^0,u_B^1|\xi_A,\xi_B))\\
&=-\frac{1}{2}\sum_{\xi_A, \xi_B}(\Pbb(\xi_A, \xi_B,0)(\sim\xi_A\cdot\xi_B\oplus\alpha\cdot\xi_A\oplus\beta\cdot\xi_B\oplus\delta)
\\
&\qquad\qquad+\chi\Pbb(\xi_A, \xi_B, 1)(\sim\xi_A\cdot\xi_B\oplus\alpha\cdot\xi_A\oplus\beta\cdot\xi_B\oplus\delta))\\
&\geq \frac{1}{2}\left(J(\pi_{\hat{\gamma}};D)+J(\pi_{\overline{\gamma}};D)\right),
\end{align*}
where in the last step we have used that the terms multiplying the probabilities are nonnegative. Since the RHS is independent of the no-signalling vertex, the cost of every no-signalling vertex is bounder below by the cost of a deterministic policy
$J(Q^{\alpha\beta\delta}; D)\geq \min(J(\pi_{\hat{\gamma}};D), J(\pi_{\overline{\gamma}};D)). $
Since the instance $D$ was arbitrary, this establishes that $\Cscr_{11}^a$ does not admit a no-signalling and hence quantum advantage.

We follow a similar line of arguments for $\Cscr_{11}^c.$ For an instance $D\in\Cscr(M_c, N_c)$, we have $\ell(u_A^0,u_B^0,0)=-1$, $\ell(u_A^0, u_B^1, 1)=-\chi$ and $\ell(.)\equiv 0$ otherwise. Now consider deterministic policies $\hat{\gamma}$: $\hat{\gamma}_A(\xi_A)\equiv u_A^0, \hat{\gamma}_B(\xi_B)\equiv u_B^0$ and $\overline{\gamma}$: $\overline{\gamma}_A(\xi_A)\equiv u_A^0,\overline{\gamma}_B(\xi_B)\equiv u_B^1$
It is straightforward to evaluate
\begin{align*}
J(\pi_{\hat{\gamma}};D)&=-\Pbb(\xi_{W}=0)\quad 
J(\pi_{\overline{\gamma}};D)=-\chi \Pbb(\xi_w=1).
\end{align*}
Now for any no-signalling vertex $Q^{\alpha\beta\delta}\in\Nscr\Sscr$, we again have,
\begin{align*}
J(Q^{\alpha\beta\delta}; D)
&\geq \frac{1}{2}\left(J(\pi_{\hat{\gamma}};D)+J(\pi_{\overline{\gamma}};D)\right),
\end{align*}
whereby $J^*_{\Nscr\Sscr}(D)=J^*_\Lscr(D),$ and that $\Cscr_{11}^c$ does not admit a quantum advantage.
This establishes the proposition. 
\end{proof}

\subsubsection{Elimination of 1-3 and 3-1 problem classes $\Cscr_{13}$ and $\Cscr_{31}$}
Now consider the set of $1$-$3$ class $\Cscr_{13}$. We argue that it does not admit a quantum advantage. Since $\Cscr_{31}$ is can generated by from $\Cscr_{13}$ by action $\Esf$, thanks to Proposition~\ref{prop:exchange} we need not discuss $\Cscr_{31}$ separately.
 Again, define 
$\Cscr_{13}^o=\Cscr_{13}\cap\Cscr^o$, 
define the achiral pair $V_a=(M_a, N_a)$ as below,
\begin{equation}\label{eq:13ma}
M_a:=\begin{pmatrix}
-1&0\\
0&0
\end{pmatrix}, N_a:=\begin{pmatrix}
0&-1\\
-1&-1
\end{pmatrix},
\end{equation}
and let, $\Cscr_{13}^a:=\Cscr( (V^a, \Omega))\cap \Cscr_{13}$ and note that
\begin{equation}\label{eq:13ca}
\Cscr_{13}^a=\{\Cscr(V_a), \Cscr(\Rsf V_a), \Cscr(V_a\Rsf), \Cscr(\Rsf V_a\Rsf)\}.
\end{equation}
Notice that we do not have a chiral class in $\Cscr_{13}$, since in such a class the matrix $N$ would have $3$ entries as $-1$, none of which overlap with the $-1$ entry in $M$, and must not be of the form in \eqref{eq:13ma}.
Notice that $|\Cscr_{13}|=\prescript{4}{}C_1\prescript{4}{}C_3=16$, $|\Cscr^o_{13}|=12$ from \eqref{eq:overlap}, and   $|\Cscr_{13}^a|=4=|\Cscr_{13}|-|\Cscr_{13}^o|$ so that $\Cscr^a_{13}$ and $\Cscr_{13}^o$ indeed themselves partition the set of all $1$-$3$ classes.
\begin{proposition}
$\Cscr_{13}$ and $\Cscr_{31}$ do not admit a quantum advantage.
\label{prop:eliminatecomp}
\end{proposition}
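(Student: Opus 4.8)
The plan is to reduce the whole family $\Cscr_{13}$ to the single achiral representative $V_a$ and then run the no-signalling argument of Proposition~\ref{prop:11eliminate}. By Proposition~\ref{prop:exchange} it suffices to treat $\Cscr_{13}$, since $\Cscr_{31}$ is its image under $\Esf$. Using the partition $\Cscr_{13}=\Cscr_{13}^o\cup\Cscr_{13}^a$ noted above, the overlapping part $\Cscr_{13}^o\subset\Cscr^o$ is eliminated at once by Lemma~\ref{prop:centralproblems}. Since $\Cscr_{13}^a=\Cscr((V_a;\Omega))\cap\Cscr_{13}$, Propositions~\ref{prop:transpose}--\ref{prop:exchange} reduce the achiral part to showing that $\Cscr(V_a)$, with $(M_a,N_a)$ as in \eqref{eq:13ma}, admits no quantum advantage; I will prove the stronger claim that it admits no no-signalling advantage.

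The structural fact driving everything is that for $(M_a,N_a)$ the outcome $(u_A^0,u_B^0)$ is the unique rewarding cell when $\xi_W=0$ and the unique non-rewarding cell when $\xi_W=1$. Consequently, for any $Q$ the cost \eqref{eq:expcost} depends on $Q$ only through the four numbers $r(\xi_A,\xi_B):=Q(u_A^0,u_B^0|\xi_A,\xi_B)$. Writing $c(\xi_A,\xi_B):=\Pbb(\xi_A,\xi_B,0)-\chi\Pbb(\xi_A,\xi_B,1)$, one has
\begin{equation*}
J(Q;D)=-\chi\Pbb(\xi_W=1)-\sum_{\xi_A,\xi_B}r(\xi_A,\xi_B)\,c(\xi_A,\xi_B),
\end{equation*}
which is affine in $r$. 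As $\Nscr\Sscr$ is a polytope and $J$ is linear in $Q$, $J^*_{\Nscr\Sscr}(D)$ is attained at a vertex; the $16$ local vertices lie in $\Lscr$, so it remains only to bound the $8$ non-local vertices $Q^{\alpha\beta\delta}$ of \eqref{eq:nonlocalvertex} below by $J^*_\Lscr(D)$.

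The non-local vertices are the only real obstacle. Computing $r$ for $Q^{\alpha\beta\delta}$ shows that $r=\tfrac12\mathbf{1}_S$, where $S$ is either a single cell $\{(i,j)\}$ or the complement of a single cell. Deterministic local policies give $r(\xi_A,\xi_B)=a_{\xi_A}b_{\xi_B}$ with $a,b\in\{0,1\}^2$, so they realise exactly single cells, full rows, full columns and the full array, but \emph{not} the three-cell L-shaped complements. I resolve this by decomposing the complement of $(i,j)$ as the disjoint union of the full row $1-i$ and the single cell $(i,1-j)$, both locally realisable (a single cell decomposes trivially as itself plus the empty pattern $r\equiv 0$). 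Taking deterministic policies $\gamma_1,\gamma_2$ realising the two pieces, we have $r^{\alpha\beta\delta}=\tfrac12(r_{\gamma_1}+r_{\gamma_2})$, so affineness of $J$ in $r$ yields $J(Q^{\alpha\beta\delta};D)=\tfrac12(J(\pi_{\gamma_1};D)+J(\pi_{\gamma_2};D))$, which is at least $\min(J(\pi_{\gamma_1};D),J(\pi_{\gamma_2};D))\geq J^*_\Lscr(D)$ because $\pi_{\gamma_1},\pi_{\gamma_2}\in\Pi\subset\Lscr$. This is exactly the mechanism of Proposition~\ref{prop:11eliminate}, but with the reference policies chosen per vertex.

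Since every vertex of $\Nscr\Sscr$ then costs at least $J^*_\Lscr(D)$, we conclude $J^*_{\Nscr\Sscr}(D)=J^*_\Lscr(D)$; as $D\in\Cscr(V_a)$ was arbitrary, $\Cscr(V_a)$ admits no no-signalling, and hence by \eqref{eq:costheirarchy} no quantum, advantage. Propagating through the orbit via Propositions~\ref{prop:transpose}--\ref{prop:exchange} eliminates all of $\Cscr_{13}^a$; together with $\Cscr_{13}^o$ and the $\Esf$-reduction of $\Cscr_{31}$, the proposition follows. The one point requiring genuine care is precisely the L-shaped supports, where a single naive reference policy fails and the row-plus-cell splitting is needed.
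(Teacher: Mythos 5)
Your proof is correct, and it follows the same skeleton as the paper's: the $\Esf$-reduction of $\Cscr_{31}$, the partition of $\Cscr_{13}$ into $\Cscr_{13}^o$ (killed by Lemma~\ref{prop:centralproblems}) and $\Cscr_{13}^a$, the orbit reduction to the single achiral representative $V_a$ of \eqref{eq:13ma}, and finally a bound showing each non-local no-signalling vertex costs at least the average of two deterministic local policies. The genuine difference lies in how that key decomposition is established. The paper exhibits the two deterministic policies in closed form via the Boolean expressions \eqref{eq:x}--\eqref{eq:b} in $(\alpha,\beta,\delta)$ and then verifies the resulting identity \eqref{eq:pyclaim1} by exhaustive enumeration in a supplementary Python notebook. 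You avoid both the formulas and the computation: you observe that the only quantity that matters is $r=Q(u_A^0,u_B^0|\cdot,\cdot)$, that for a non-local vertex $r=\tfrac12\mathbf{1}_S$ with $|S|\in\{1,3\}$ (a parity count: the XOR of $\xi_A\xi_B\oplus\alpha\xi_A\oplus\beta\xi_B\oplus\delta$ over the four cells is $1$), that deterministic policies realise exactly the product supports, and that a three-cell support splits as the disjoint union of a row and a single cell. This makes the argument self-contained and checkable by hand, at the price of choosing the reference policies per vertex rather than uniformly; the paper's route, in exchange, yields explicit closed-form decomposing policies. Two further points in your favour: you make explicit the vertex-optimality step (linearity of $J$ over the polytope $\Nscr\Sscr$, with the $16$ local vertices lying in $\Lscr$) that the paper leaves implicit, and your affine expression for $J(Q;D)$ in terms of $r$ is the correct one --- the paper's \eqref{eq:1-3nscost} contains a sign slip, as the $\chi$-term should enter with the same sign as the first term.
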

\begin{proof}
$\Cscr^o_{13}$ does not admit quantum advantage since $\Cscr_{13}^o\subset \Cscr^o$. Now to show the same for $\Cscr^a_{13}$, consider an instance $D\in \Cscr(M_a, N_a)$ for $V_a=(M_a, N_a)$ as in \eqref{eq:13ma}. Recall from \eqref{eq:localvertex} that $\pi^{abcd}$ denotes a local deterministic strategy for all binary $a, b, c, d$.  For any no-signalling vertex $Q^{\alpha\beta\delta}$, we claim that
\begin{equation}\label{eq:13nosigdec}
J(Q^{\alpha\beta\delta}; D)=\frac{1}{2}(J(\pi^{xyzw}; D)+J(\pi^{11ab};D))
\end{equation}
where $\pi^{xyzw}$ and $\pi^{11ab}$, as defined in \eqref{eq:localvertex}, are vertices of the local polytope specified by the Boolean variables $x,y,z,w,a,b\in\{0, 1\}$, which in turn are given  in terms of $\alpha, \beta$ and $\delta$ as,
\begin{align}\label{eq:x}
x&=(\sim\beta\cdot\sim\delta)\vee(\beta\cdot\alpha\cdot\delta)\\
y&=\sim\alpha\cdot\sim\delta\cdot\beta\label{eq:y}\\
z&=\delta\vee(\sim\delta\cdot\alpha\cdot\beta)\label{eq:z}\\
w&=(\alpha\cdot(\beta\oplus\delta))\vee(\sim\alpha\cdot\delta)\label{eq:w}\\
a&=\sim\beta\vee(\beta\cdot\sim\alpha\cdot\sim\delta)\label{eq:a}\\
b&=(\sim\alpha\cdot(\beta\vee\delta))\vee(\alpha\cdot(\sim\beta\oplus\delta))\label{eq:b}.
\end{align}
To establish this claim, notice for $D\in\Cscr(V_a)$, and any $Q\in \Pscr(\Uscr|\Xi)$,
\begin{align}
J(Q; D)&=-\sum_{\xi_A, \xi_b}\left (\Pbb(\xi_A, \xi_B, 0)Q(u_A^0,u_B^0|\xi_A, \xi_B)\nonumber\right.\\
&\left.-\chi\Pbb(\xi_A, \xi_B, 1)(1-Q(u_A^0,u_B^0|\xi_A, \xi_B))\right ).\label{eq:1-3nscost} 
\end{align}
To establish \eqref{eq:13nosigdec},  we  show that for all $\alpha, \beta, \delta, \xi_A, \xi_B \in\{0, 1\}$, and with $x,y,z,w,a,b$ as specified by \eqref{eq:x}-\eqref{eq:b},
\begin{align}\label{eq:pyclaim1}
&Q^{\alpha\beta\delta}(u_A^0, u_B^0|\xi_A, \xi_B)\nonumber\\
&=\frac{1}{2}(\pi^{xyzw}(u_A^0, u_B^0|\xi_A, \xi_B)+\pi^{11ab}(u_A^0, u_B^0|\xi_A, \xi_B)),
\end{align}
so that \eqref{eq:13nosigdec} now follows from \eqref{eq:pyclaim1}.

The validity of \eqref{eq:pyclaim1} can be done through straightforward computation; due to the large number of variables involved, we relegate this to a Python notebook~\cite{c13pylink} provided in the supplementary material. This establishes our claim
\eqref{eq:13nosigdec}. 

We have thus established that for each no-signalling $Q^{\alpha\beta\delta}$ policy, there exists a policy in $\pi \in \Lscr$ such that $J(Q^{\alpha\beta\delta};D)\geq J(\pi;D),$ whereby establishing that $J^*_{\Nscr\Sscr}(D) \geq J^*_\Lscr(D).$ Since $D$ is arbitrary, there is no quantum advantage in $\Cscr_{13},$ and from Proposition~\ref{prop:exchange}, none in $\Cscr_{31}.$
\end{proof}
 
\subsection{$\Cscr_{12}$, $\Cscr_{21}$ and the \textit{1/2-CAC} problem class}
We now come to the $1$-$2$ class 
$\Cscr_{12}$; we will quickly address $\Cscr_{21}$ at the end of this subsection. Define $\Cscr_{12}^o=\Cscr_{12}\cap\Cscr^o,$
the achiral pair $V_a=(M_a, N_a)$, 
\begin{equation}\label{eq:12ma}
 M_a:=\begin{pmatrix}
-1&0\\
0&0
\end{pmatrix}, N_a:=\begin{pmatrix}
0&-1\\
0&-1
\end{pmatrix},
\end{equation}
and the chiral pair $V_c=(M_c, N_c)$ 
\begin{equation}\label{eq:12mc}
	M_c:=\begin{pmatrix}
		-1&0\\
		0&0
	\end{pmatrix}, N_c:=\begin{pmatrix}
		0&-1\\
		-1&0
	\end{pmatrix}.
\end{equation}
Note that the chiral pair is $\half$-CAC form. Let $\Cscr_{12}^a:=\Cscr( (V^a, \Omega))\cap \Cscr_{12}$ and $\Cscr_{12}^c:=\Cscr( (V_c, \Omega))\cap \Cscr_{12}$. 
Note that
\begin{align}\label{eq:c12adef}
 \Cscr_{12}^a=&\{\Cscr(V_a), \Cscr(\Rsf V_a),
 \Cscr(V_a\Rsf), \Cscr(\Rsf V_a\Rsf), \Cscr(\Tsf V_a), \nonumber\\
 &\quad \Cscr(\Rsf\Tsf V_a),
 \Cscr(\Tsf V_a\Rsf), \Cscr(\Rsf \Tsf V_a\Rsf)\}, \\
\Cscr_{12}^c =&\{\Cscr(V_c), \Cscr(\Rsf V_c), \Cscr(V_c\Rsf), \Cscr(\Rsf V_c\Rsf)\}.  \label{eq:c12cdef}
\end{align}
Further, notice that $|\Cscr_{12}|=\prescript{4}{}C_1\prescript{4}{}C_2=24$, 
 $|\Cscr_{12}^o|+|\Cscr_{12}^a|+|\Cscr_{12}^c|=\Cscr_{12}=24$ so $\Cscr_{12}^o$, $\Cscr_{12}^a$ and $\Cscr_{12}^c$ partition the set $\Cscr_{12}$.
We eliminate all 1-2 classes not in the orbit of the $\half$-CAC class (\ie, not in the orbit of the chiral pair $(M_c,N_c)$) in the following proposition.
\begin{proposition}\label{prop:1-2Celiminate}
1) $\Cscr_{12}^o$ does not admit a quantum advantage.\\
2) $\Cscr_{12}^a$ does not admit quantum advantage.
\end{proposition}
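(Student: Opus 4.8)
The plan is to dispatch the two parts separately. Part 1 is immediate: since $\Cscr_{12}^o=\Cscr_{12}\cap\Cscr^o\subset\Cscr^o$, every class in $\Cscr_{12}^o$ already falls under Lemma~\ref{prop:centralproblems}, so it admits no centralisation advantage and \textit{a fortiori} no quantum advantage. For Part 2, I would reduce the whole orbit $\Cscr_{12}^a=\Cscr((V_a;\Omega))\cap\Cscr_{12}$ to its generator $\Cscr(V_a)$, with $V_a=(M_a,N_a)$ as in \eqref{eq:12ma}, invoking the equivalences of Propositions~\ref{prop:transpose}--\ref{prop:exchange}: if $\Cscr(V_a)$ has no quantum advantage then neither does any class in its orbit.

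First I would read off the only nonzero costs of an instance $D\in\Cscr(V_a)$, namely $\ell(u_A^0,u_B^0,0)=-1$ and $\ell(u_A^0,u_B^1,1)=\ell(u_A^1,u_B^1,1)=-\chi$. The structural key is that the two $\xi_W=1$ rewards occupy the \emph{entire} column $u_B=u_B^1$, so from \eqref{eq:expcost} the cost of an arbitrary $Q$ separates as
\begin{align*}
J(Q;D)=&-\sum_{\xi_A,\xi_B}\Pbb(\xi_A,\xi_B,0)\,Q(u_A^0,u_B^0|\xi_A,\xi_B)\\
&-\chi\sum_{\xi_A,\xi_B}\Pbb(\xi_A,\xi_B,1)\,Q_B(u_B^1|\xi_B),
\end{align*}
where $Q_B(u_B^1|\xi_B):=\sum_{u_A}Q(u_A,u_B^1|\xi_A,\xi_B)$ is independent of $\xi_A$ precisely by the no-signalling constraints defining $\Nscr\Sscr$.

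Because $J$ is linear over the polytope $\Nscr\Sscr$, its infimum is attained at a vertex, so I would only handle the nonlocal vertices $Q^{\alpha\beta\delta}$ of \eqref{eq:nonlocalvertex} (local vertices lie in $\Lscr$ and cost at least $J^*_\Lscr(D)$). A short computation would show that every such vertex has $Q_B^{\alpha\beta\delta}(u_B^1|\xi_B)=\half$ for all $\xi_B$, and $Q^{\alpha\beta\delta}(u_A^0,u_B^0|\xi_A,\xi_B)=\half$ exactly when $\xi_A\cdot\xi_B\oplus\alpha\xi_A\oplus\beta\xi_B\oplus\delta=0$ (and $0$ otherwise). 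I would then compare against the two deterministic policies $\hat\gamma$ ($\hat\gamma_A\equiv u_A^0,\ \hat\gamma_B\equiv u_B^0$) and $\overline\gamma$ ($\overline\gamma_A\equiv u_A^0,\ \overline\gamma_B\equiv u_B^1$), for which $J(\pi_{\hat\gamma};D)=-\Pbb(\xi_W=0)$ and $J(\pi_{\overline\gamma};D)=-\chi\Pbb(\xi_W=1)$. The $\xi_W=1$ terms match exactly ($-\tfrac{\chi}{2}\Pbb(\xi_W=1)$), and bounding the indicator in the $\xi_W=0$ term above by $1$ gives
\[
J(Q^{\alpha\beta\delta};D)\ \geq\ \tfrac12\big(J(\pi_{\hat\gamma};D)+J(\pi_{\overline\gamma};D)\big)\ \geq\ \min\big(J(\pi_{\hat\gamma};D),\,J(\pi_{\overline\gamma};D)\big).
\]
Since every vertex of $\Nscr\Sscr$ then costs at least $J^*_\Lscr(D)$, I would conclude $J^*_{\Nscr\Sscr}(D)=J^*_\Lscr(D)$ via \eqref{eq:costheirarchy}; as $D$ is arbitrary, $\Cscr_{12}^a$ admits no no-signalling, hence no quantum, advantage.

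I expect the only delicate steps to be (i) verifying that the achiral \emph{column} pattern of $N_a$ genuinely collapses the $\xi_W=1$ cost onto the single-party marginal $Q_B(u_B^1|\cdot)$, and (ii) getting the direction of the indicator bound right, so that the nonlocal vertex is dominated by the convex combination of $\pi_{\hat\gamma}$ and $\pi_{\overline\gamma}$ rather than vice versa. Both rest on the same feature that separates this achiral case from the chiral $\half$-CAC class: there the rewarded cells do not line up with one agent's action, the marginal is no longer pinned at $\half$, and this averaging argument necessarily breaks down.
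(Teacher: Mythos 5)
Your proposal is correct and follows essentially the same route as the paper's proof: part 1 via $\Cscr_{12}^o\subset\Cscr^o$, and part 2 by bounding the cost of each non-local no-signalling vertex $Q^{\alpha\beta\delta}$ below by $\tfrac12\bigl(J(\pi_{\hat\gamma};D)+J(\pi_{\overline\gamma};D)\bigr)$ using the same two deterministic policies, the pinned marginal $\sum_{u_A}Q^{\alpha\beta\delta}(u_A,u_B^1|\xi_A,\xi_B)=\tfrac12$, and the indicator bound on the $\xi_W=0$ term. The only differences are cosmetic --- you make explicit the vertex/linearity step and the orbit reduction via Propositions~\ref{prop:transpose}--\ref{prop:exchange}, which the paper leaves implicit, and your $\overline\gamma_A\equiv u_A^0$ versus the paper's $u_A^1$ is immaterial since the $\xi_W=1$ reward fills the whole $u_B^1$ column.
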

\begin{proof}
1) Immediate from $\Cscr_{12}^o\subset \Cscr^o$.\\
2) For an instance $D\in\Cscr(V_a)$, Consider two deterministic policies $\hat\gamma$ and $\overline \gamma$, and the corresponding costs:
\begin{align}
\hat{\gamma}_A(\xi_A)&\equiv u_A^0, \hat{\gamma}_B(\xi_B)\equiv u_B^0; J(\pi_{\hat{\gamma}};D)=-\Pbb(\xi_W=0)\nonumber \\
\overline{\gamma}_A(\xi_A)&\equiv 1,\overline{\gamma}_B(\xi_B)\equiv 1;J(\pi_{\overline{\gamma}};D)=-\chi \Pbb(\xi_W=1).\label{eq:elidet}
\end{align}
Now consider for a no-signalling vertex $Q^{\alpha\beta\delta}\in\Nscr\Sscr$ and recall \eqref{eq:nonlocalvertex} to express 
\begin{align*}
&J(Q^{\alpha\beta\delta}; D)=-\sum_{\xi_A, \xi_B} \Pbb(\xi_A,\xi_B,0)Q^{\alpha\beta\delta}(u_A^0,u_B^0|\xi_A,\xi_B)+\\
& \chi\Pbb(\xi_A,\xi_B,1)\sum_{u_A} Q^{\alpha\beta\delta}(u_A, u_B^1|\xi_A, \xi_B)\\
&=\frac{-1}{2}\sum_{\xi_A, \xi_B}(\Pbb(\xi_A, \xi_B,0)(\sim\xi_A\cdot\xi_B\oplus\alpha\cdot\xi_A\oplus\beta\cdot\xi_B\oplus\delta)\\
&\qquad\qquad+\chi\Pbb(\xi_A, \xi_B, 1)(\xi_A\cdot\xi_B\oplus\alpha\cdot\xi_A\oplus\beta\cdot\xi_B\oplus\delta\\
&\qquad\qquad + \sim\xi_A\cdot\xi_B\oplus\alpha\cdot\xi_A\oplus\beta\cdot\xi_B\oplus\delta)),\\
&\geq \frac{1}{2}(J(\pi_{\hat{\gamma}};D)+J(\pi_{\overline{\gamma}};D)).
\end{align*}
In the last inequality we have again used the nonnegativity of the terms multiplying the probabilities. 
Thus, the cost of every no-signalling policy is bounded below by the cost of a deterministic policy in $\Lscr$. Arguing as in Proposition~\ref{prop:11eliminate}, we see that there is no no-signalling advantage and quantum advantage within in $\Cscr^a_{12}$. This establishes the proposition.
\end{proof}
Now notice that $\Cscr_{21}=\Esf\Cscr_{12}$. Thus define $\Cscr_{21}^o=\Esf\Cscr_{12}^o, \Cscr_{21}^a=\Esf\Cscr_{12}^a$ and $\Cscr_{21}^c=\Esf\Cscr_{12}^c$, and the 2-1 class partitions into $\Cscr_{21}^o$, $\Cscr_{21}^a$ and $\Cscr_{21}^c$.
$\Cscr_{21}^c$ here lies within the orbit of the $\half$-CAC class, and the elimination of the other two $\Cscr_{21}^o$ and $\Cscr_{21}^a$ follows from Proposition \ref{prop:1-2Celiminate} and Proposition~\ref{prop:11eliminate}. This subsection thus eliminates all 1-2 and 2-1 classes that do not lie in the orbit of $\half$-CAC class.

\subsection{$\Cscr_{22}$ and the CAC problem class}
Ultimately, we attend the set of 2-2 classes  $\Cscr_{22}$. Define 
$\Cscr_{22}^o=\Cscr_{22}\cap\Cscr^o$, 
the chiral pair $V_c=(M_c, N_c)$ and the achiral pair $V_a=(M_a, N_a)$
\begin{equation}\label{eq:22mc}
M_c:=\begin{pmatrix}
-1&0\\
0&-1
\end{pmatrix}, N_c:=\begin{pmatrix}
0&-1\\
-1&0
\end{pmatrix}.
\end{equation}
\begin{equation}\label{eq:22ma}
 M_a:=\begin{pmatrix}
-1&0\\
-1&0
\end{pmatrix}, N_a:=\begin{pmatrix}
0&-1\\
0&-1
\end{pmatrix}.
\end{equation}
The chiral pair is in the CAC form. 
Define $\Cscr_{22}^c:= \Cscr((V^c, \Omega))\cap \Cscr_{22}$ and $ \Cscr_{22}^a:=\Cscr( (V^a, \Omega))\cap\Cscr_{22}$. Notice that,
\begin{align}
\Cscr_{22}^c&=\{\Cscr(V_c), \Cscr(\Rsf V_c)\},\label{eq:c22cdef}\\
\Cscr_{22}^a&=\{\Cscr(V_a),
 \Cscr(V_a\Rsf), \Cscr(\Tsf V_a), \Cscr(\Rsf \Tsf V_a)\}. \label{eq:c22adef}
\end{align} 
It is easy to check by inspection, and using \eqref{eq:overlap}, $|\Cscr^o_{22}|+|\Cscr^a_{22}|+|\Cscr^c_{22}|=|\Cscr_{22}|$
so that $\Cscr^o_{22}$, $\Cscr^a_{22}$ and $\Cscr^c_{22}$ partition $\Cscr_{22}$. Consequently, the sets $\Cscr^o_{22}$ and $\Cscr^a_{22}$ capture all 2-2 classes which are outside the orbit of CAC, and we eliminate these sets in the following proposition.
\begin{proposition}\label{prop:2-2Celiminate}
1) $\Cscr_{22}^o$ does not admit a quantum advantage. \\
2) $\Cscr_{22}^a$ does not admit quantum advantage.
\end{proposition}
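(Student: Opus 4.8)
The plan is to dispatch part (1) instantly and to prove part (2) by exposing the achiral pair as a disguised single-player problem. For part (1), I would note simply that $\Cscr_{22}^o = \Cscr_{22}\cap\Cscr^o \subset \Cscr^o$, so Lemma~\ref{prop:centralproblems} already rules out even a centralisation advantage, and hence a quantum advantage, for every class in $\Cscr_{22}^o$.

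For part (2), the first and decisive step is to read off the cost of the achiral representative $V_a=(M_a,N_a)$ in \eqref{eq:22ma} through \eqref{eq:costmat}: one finds $\ell(u_A,u_B,0)=-1$ exactly when $u_B=u_B^0$ and $\ell(u_A,u_B,1)=-\chi$ exactly when $u_B=u_B^1$, \emph{independently of} $u_A$. Thus the cost is a function of player $B$'s action alone; there is no coordination to perform, which is precisely why we anticipate no advantage. I would then invoke the no-signalling constraint \eqref{eq:nosig2-222}, which makes the $B$-marginal $Q_B(u_B|\xi_B):=\sum_{u_A}Q(u_A,u_B|\xi_A,\xi_B)$ well-defined (independent of $\xi_A$). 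Substituting into \eqref{eq:expcost}, the cost of any $Q\in\Nscr\Sscr$ collapses to
\[
J(Q;D)=-\sum_{\xi_A,\xi_B}\bigl(\Pbb(\xi_A,\xi_B,0)\,Q_B(u_B^0|\xi_B)+\chi\,\Pbb(\xi_A,\xi_B,1)\,Q_B(u_B^1|\xi_B)\bigr),
\]
a quantity determined entirely by the single-player marginal $Q_B\in\Pscr(\Uscr_B|\Xi_B)$.

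The remaining step mirrors Propositions~\ref{prop:11eliminate} and~\ref{prop:1-2Celiminate}. I would introduce the deterministic policies $\hat\gamma$ with $\hat\gamma_B(\xi_B)\equiv u_B^0$ and $\overline\gamma$ with $\overline\gamma_B(\xi_B)\equiv u_B^1$ (player $A$'s rule being immaterial), whose costs are $J(\pi_{\hat\gamma};D)=-\Pbb(\xi_W=0)$ and $J(\pi_{\overline\gamma};D)=-\chi\Pbb(\xi_W=1)$. For each non-local vertex $Q^{\alpha\beta\delta}$ of \eqref{eq:nonlocalvertex} the $B$-marginal is uniform, $Q_B(u_B^j|\xi_B)=\tfrac12$, because for every fixed $(\xi_A,\xi_B,j)$ exactly one index $i$ satisfies the defining XOR relation; hence $J(Q^{\alpha\beta\delta};D)=\tfrac12\bigl(J(\pi_{\hat\gamma};D)+J(\pi_{\overline\gamma};D)\bigr)\geq J^*_\Lscr(D)$. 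Since the local vertices of $\Nscr\Sscr$ lie in $\Lscr$ and $J$ is linear, its minimum over the polytope $\Nscr\Sscr$ is attained at a vertex, so $J^*_{\Nscr\Sscr}(D)=J^*_\Lscr(D)$; as $D$ was arbitrary, $\Cscr_{22}^a$ admits no no-signalling, and hence no quantum, advantage.

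I expect the main obstacle to be purely bookkeeping rather than conceptual: unlike the $1$-$3$ elimination, which required a symbolic check, here the cost is manifestly single-player in $u_B$, so the only computation needing care is confirming the uniformity of the $B$-marginal for the eight non-local vertices directly from \eqref{eq:nonlocalvertex}.
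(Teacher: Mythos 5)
Your proposal is correct and follows essentially the same route as the paper: part (1) via $\Cscr_{22}^o\subset\Cscr^o$, and part (2) via the same two deterministic policies $\hat\gamma,\overline\gamma$, the observation that every non-local vertex $Q^{\alpha\beta\delta}$ yields exactly the average $\tfrac12\bigl(J(\pi_{\hat\gamma};D)+J(\pi_{\overline\gamma};D)\bigr)$, and the vertex argument giving $J^*_{\Nscr\Sscr}(D)=J^*_\Lscr(D)$. Your framing of the cost as single-player in $u_B$ with a well-defined $B$-marginal under no-signalling is a nice conceptual gloss that the paper leaves implicit, but the computational content is identical.
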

\begin{proof}
1) Immediate since $\Cscr_{22}^o\subset \Cscr^o$.\\
2) 
Let $D$ be an instance in $\Cscr_{22}^a$ and let $\hat \gamma$ and $\overline\gamma$ be as defined in \eqref{eq:elidet}. Consider any no-signalling policy $Q^{\alpha\beta\delta}\in\Nscr\Sscr$ and notice,
\begin{align*}
&J(Q^{\alpha\beta\delta}; D)=-\sum_{\xi_A, \xi_B}(\Pbb(\xi_A,\xi_B,0)(Q^{\alpha\beta\delta}(u_A^0,u_B^0|\xi_A,\xi_B))\\
&\qquad +\chi\Pbb(\xi_A,\xi_B,1)(\sum_{u_A}Q^{\alpha\beta\delta}(u_A^0,u_B^1|\xi_A,\xi_B)))	
\end{align*}
Using \eqref{eq:nonlocalvertex},
\begin{align*}
J(Q^{\alpha\beta\delta}; D)&=-\frac{1}{2}\sum_{\xi_A, \xi_B}(\Pbb(\xi_A, \xi_B,0)+\chi\Pbb(\xi_A, \xi_B,1))\times\\
&(\sim\xi_A\cdot\xi_B\oplus\alpha\cdot\xi_A\oplus\beta\cdot\xi_B\oplus\delta \\
&\qquad \qquad+ \xi_A\cdot\xi_B\oplus\alpha\cdot\xi_A\oplus\beta\cdot\xi_B\oplus\delta)\\
&=\frac{1}{2}(J(\pi_{\hat{\gamma}};D)+J(\pi_{\overline{\gamma}};D)).
\end{align*}
Arguing as in Proposition~\ref{prop:11eliminate}, $J^*_{\Nscr\Sscr}(D)=J^*_\Lscr(D) $, 
 and the no-signalling and quantum advantages are absent in $\Cscr_{22}^a$. This establishes the proposition.
\end{proof}

\section{Proof of Theorem~\ref{thm:main}: Quantum Advantage in $\half$-CAC}\label{sec:1/2cacpoc}
We now have all but one ingredient to prove Theorem~\ref{thm:main}. We have shown that all classes not in the orbit of the CAC and $\half$-CAC class do not admit a quantum advantage. That CAC admits a quantum advantage was shown in \cite{deshpande2022quantum}. We now show this for $\half$-CAC. Consider a problem instance $D$ in $\half$-CAC class with the specification $D=(M_c, N_c, \Pbb, \Uscr_A, \Uscr_B, \chi)$ where $M_c, N_c$ are as in \eqref{eq:12mc}, and $\Pbb\in\Pscr(\Xi)$ is such that 
\begin{equation}\label{eq:1/2cacprior}
\Pbb(\xi_A, \xi_B, \xi_W)=\begin{cases} 0.2& \xi= (0,0,1), (0,1,1), (1,0,1)\\
0.4 & \xi=(1,1,0)\\
0 &\text{otherwise}
\end{cases}
\end{equation}
and $\chi=2$. It is straightforward to sift through all 16 deterministic strategies in $\Pi$. We state at optimal policy and the optimal local cost here, $\gamma_A^*\equiv u_A^0$, $\gamma_B^*\equiv u_B^1$ and $J_\Lscr^*(D)=-6/5$, and justify this statement in Lemma \ref{lem:1/2cacdemodet} in the appendix.

We now specify a quantum strategy $Q$ that achieves a  lower cost. We consider two dimensional Hilbert spaces $\Hscr_A$, $\Hscr_B$ and a four dimensional $\Hscr=\Hscr_A\otimes \Hscr_B$. Let $\{\ket{z^+_i}, \ket{z^-_i}\}$ be an orthonormal basis of $\Hscr_i$ for $i\in\{A, B\}$. We work with a Euclidean  representation $\ket{z^+_i}\equiv (1, 0)\t; \ket{z^-_i}\equiv (0, 1)\t$. $\Hscr$ is then spanned by $\{\ket{z^+_A}\otimes \ket{z^+_B}, \ket{z^+_A}\otimes \ket{z^-_B}, \ket{z^-_A}\otimes \ket{z^+_B}, \ket{z^-_A}\otimes \ket{z^-_B}\}$ and enumerated in that order. Let $\rho_{AB}$ be the following density operator on $\Hscr$,
$$\rho_{AB}=\begin{pmatrix}
1/4&0&0&\sqrt{3}/{4}\\
0&0&0&0\\
0&0&0&0\\
{\sqrt{3}}/{4}&0&0&{3}/{4}
\end{pmatrix}.$$
It is clear that $\rho_{AB}$ satisfies $\rho_{AB}^\dagger=\rho_{AB}, \Tr( \rho_{AB})=1$:
Next, we specify the projection operators $P_{u_A}^{(A)}(\xi_A)\in\Bscr(\Hscr_A), P_{u_B}^{(B)}(\xi_B)\in\Bscr(\Hscr_B)$:
\begin{align*}
P_{u_A^0}^{(A)}(0)&=\begin{pmatrix} 1&0\\0&0 \end{pmatrix}; P_{u_A^1}^{(A)}(0)=\Ibf - P_{u_A^0}^{(A)}(0),\\
P_{u_A^0}^{(A)}(1)&=\frac{1}{2} \begin{pmatrix} 1 & e^{-\iota \frac{\pi}{3}} \\e^{\iota \frac{\pi}{3}} & 1 \end{pmatrix}; P_{u_A^1}^{(A)}(1)=\Ibf-P_{u_A^1}^{(A)}(1),\\
P_{u_B^0}^{(B)}(0)&=\frac{1}{4}\begin{pmatrix} 2-\sqrt{3} & e^{-\iota \frac{\pi}{3}}\\ e^{\iota \frac{\pi}{3}} & 2+\sqrt{3} \end{pmatrix}; P_{u_B^1}^{(B)}(0)=\Ibf-P_{u_B^0}^{(B)}(0),\\
P_{u_B^0}^{(B)}(1)&=\frac{1}{4}\begin{pmatrix} 2-\sqrt{3} & e^{\iota \frac{\pi}{3}}\\ e^{-\iota \frac{\pi}{3}} & 2+\sqrt{3} \end{pmatrix}; P_{u_B^1}^{(B)}(1)=\Ibf-P_{u_B^0}^{(B)}(0).
\end{align*} $P_{u_A^0}^{(A)}(0)$ is trivially a projector. 
To verify that the rest are indeed projectors, denote $P(\lambda,a,b,\theta) := \frac{1}{\lambda}\begin{pmatrix} a & e^{-\iota\theta} \\e^{\iota \theta} & b \end{pmatrix}$, and notice that 
\begin{align}\label{eq:projectorver}
P(\lambda,a,b,\theta)^2=\frac{1}{\lambda}\begin{pmatrix} \frac{1+a^2}{\lambda} & \frac{(a+b)e^{-\iota\theta} }{\lambda}\\ \frac{(a+b)e^{\iota \theta} }{\lambda}& \frac{1+b^2}{\lambda} \end{pmatrix}.
\end{align}
Thus $P(\lambda,a,b,\theta)$ is a projector if $1+a^2=\lambda a, 1+b^2=\lambda b$ and $a+b=\lambda$. Taking $(\lambda, a, b)=(2, 1, 1)$, and $\theta=\pi/3$, \eqref{eq:projectorver} implies $P_{u_A^0}^{(A)}(1)$ is a projector. Similarly $(\lambda, a, b)=(4, 2-\sqrt{3}, 2+\sqrt{3})$ shows $P_{u_B^0}^{(B)}(0)$ and $P_{u_B^0}^{(B)}(1)$ are projectors with$\theta=\pi/3$ and $\theta=-\pi/3$, respectively. 

We have the cost of an instance $D=(M_c, N_c, \Pbb, \Uscr_A, \Uscr_B, 2)$ in $\half$-CAC with $\Pbb$ specified in \eqref{eq:1/2cacprior}, under policy $Q$ expressed as
\begin{align*}
&J(Q;D)\\
&=-\sum_{\xi_A, \xi_B}\Pbb(\xi_A, \xi_B, 0) Q(u_A^0, u_B^0|\xi_A, \xi_B)\\
&+2 \Pbb(\xi_A, \xi_B, 1) (Q(u_A^0, u_B^1|\xi_A, \xi_B)+Q(u_A^1, u_B^0|\xi_A, \xi_B) )\\
&=-0.4 (Q(u_A^0, u_B^0|1,1) + Q(u_A^0, u_B^1|0,0)+Q(u_A^1, u_B^0|0,0)\\
&+Q(u_A^0, u_B^1|0,1)+Q(u_A^1, u_B^0|0,1) \\
&+Q(u_A^0, u_B^1|1,0)+Q(u_A^1, u_B^0|1,0)).
\end{align*}
Now from \eqref{eq:quantconditional}, we have $Q(u_A, u_B|\xi_A, \xi_B)=\Tr(\rho_{AB}P_{u_A}^{(A)}(\xi_A)\otimes P_{u_B}^{(B)}(\xi_B))$.  
%
We list all the conditional probabilities that appear in our cost below.
\begin{align*}
Q(u_A^0, u_B^0|1,1)&=\frac{\sqrt{3}+2}{8};
Q(u_A^0, u_B^1|0,0)=\frac{\sqrt{3}+2}{16}\\
Q(u_A^1, u_B^0|0,0)&=\frac{3(\sqrt{3}+2)}{16};
Q(u_A^0, u_B^1|0,1)=\frac{\sqrt{3}+2}{16}\\
Q(u_A^1, u_B^0|0,1)&=\frac{3(\sqrt{3}+2)}{16};
Q(u_A^0, u_B^1|1,0)=\frac{1}{4}\\
Q(u_A^1, u_B^0|1,0)&=\frac{\sqrt{3}+2}{8}.
\end{align*}
This strategy thus attains the cost 
\begin{equation}
J(Q;D)=\frac{-7-3\sqrt{3}}{10}\approx -1.22<J_\Lscr^*(D)=-\frac{6}{5},
\end{equation}
and thus finishes our demonstration of the quantum advantage in the $\half$-CAC problem class.

\section{Conclusion}
An exhaustive scan of the introduced superstructure has thus revealed a restriction of the quantum advantage to the CAC and $\half$-CAC classes, which are precisely the ones that admit the coordination dilemma. In addition, these classes do admit the quantum advantage as our numerical demonstration through \cite{deshpande2022quantum} and Section \ref{sec:1/2cacpoc} have revealed. The coordination dilemma is thus central to the advantage offered by the entire set of non-locally correlated strategies that respect absence of communication in the problem. Quantum strategies are indeed, a physically implementable subset of this class.  While our line of analysis has been restricted to a specialised superstructure of binary teams, it hints that the coordination dilemma will remain an intuitive description of the parametric subspaces that admit the quantum advantage in more general problems.

In the successive article of this two part series, we look within the CAC and $\half$-CAC classes, and identify subspaces within them that subsume  the quantum advantage. Our results there characterise the favourable extent of the coordination dilemma for quantum advantage to manifest.

\bibliographystyle{IEEEtran}  
\bibliography{ref}

\appendix
\subsection{Proof of propositions in section \ref{sec:equivalences}}\label{app:equiproofs}
\subsubsection{Proof of proposition \ref{prop:transpose}}
We prove (1) $\implies$ (2) for arbitrary $M,N$; replacing $M,N$ with $M\t,N\t$ the reverse implication will follow.     
Suppose $\Cscr(M,N)$ admits a quantum advantage and consider a problem instance $D:=(M, N, \Pbb, \Uscr_A,\Uscr_B, \chi)$ with cost function $ \ell $ 
and a quantum strategy $Q$ $:=(\rho_{AB}$,$\{ P_{u_A}^{(A)}(\xi_A)\}_{u_A,\xi_A}$, $\{ {P}^{(B)}_{u_B}(\xi_B)\}_{u_B, \xi_B})$. 

Now consider an instance $D'$ in the class $\Cscr(M\t, N\t)$ given by  $(M\t, N\t, \Pbb', \Uscr_A',\Uscr_B', \chi')$ where $\chi'  = \chi,  
\Uscr_A'  = \Uscr_B, 
\Uscr_B'  = \Uscr_A $ and 
\begin{align*}
\Pbb'(\xi_A, \xi_B, \xi_W) &\equiv \Pbb(\xi_B, \xi_A, \xi_W). 
\end{align*}
It follows that the cost $\ell'$ of $D'$ satisfies $\ell'(u_A',u_B',\xi_W) \equiv \ell(u_B',u_A',\xi_W).$
Consider a strategy $Q':=(\rho_{AB}\t$, $\{{P}'^{(A)}_{u_A'}(\xi_A)\}_{u_A', \xi_A},$ $\{{P}'^{(B)}_{u_B'}(\xi_B)\}_{u_B', \xi_B} )$
where
\begin{align*}
{P}'^{(A)}_{u_A'}(\xi_A)&=P^{(B)}_{u_A'}(\xi_A),\quad  \forall u_A' \in \Uscr_A', \xi_A \in \Xi_A, \\
{P}'^{(B)}_{u_B'}(\xi_B)&=P^{(A)}_{u_B'}(\xi_B),\quad \forall u_B' \in \Uscr_B', \xi_B \in \Xi_B.
\end{align*}
Now by properties of the trace, for all $\xi_A \in \Xi_A$ ,$\xi_B \in \Xi_B$, $u_A' \in \Uscr_A'$, $u_B'\in \Uscr_B'$, 
\[\Tr\rho_{AB}\t {P}'^{(A)}_{u_A'}(\xi_A){P}'^{(B)}_{u_B'}(\xi_B) = \Tr\rho_{AB} {P}^{(A)}_{u_B'}(\xi_B) {P}^{(B)}_{u_A'}(\xi_A)\]
Thus $Q'$ satisfies for all $u_A' \in \Uscr_A', u_B'\in \Uscr_B'$,  \begin{equation}
Q'(u_A',u_B'|\xi_A,\xi_B) = Q(u_B',u_A'|\xi_B,\xi_A),\label{eq:tpolicycorrespondence} 
\end{equation} 
and hence from \eqref{eq:qcost}
\begin{align*}
J(Q';D')
&=	\sum_{\xi_A, \xi_B, \xi_W}\Pbb'(\xi_A, \xi_B, \xi_W)\sum_{ u_A', u_B'} \ell'(u_A', u_B', \xi_W)\\
&\qquad \qquad \times\Tr\left(\rho_{AB}P_{u_B'}^{(A)}(\xi_B)P_{u_A'}^{(B)}(\xi_A)\right) \\
&= \sum_{\xi_A, \xi_B, \xi_W}\Pbb(\xi_B, \xi_A, \xi_W)\sum_{ u_A', u_B'} \ell(u_B', u_A', \xi_W)\\
&\qquad \qquad\times  \Tr\left(\rho_{AB}P_{u_B'}^{(A)}(\xi_B)P_{u_A'}^{(B)}(\xi_A)\right) \\
& = J(Q;D),\label{eq:tcostcorrespondence}
\end{align*}
where we have used that $u_A'\in \Uscr_A'=\Uscr_B$ and $u_B'\in \Uscr_B'=\Uscr_A.$  
Similarly for any deterministic strategy $\gamma\in\Gamma$ consider a $\gamma'\in\Gamma'$ for the instance $D'$ such that $\gamma'_A(\xi_A)=\gamma_B(\xi_A)$ and $\gamma'_B(\xi_B)=\gamma_A(\xi_A)$  so that
\begin{align*}
J(\pi_{\gamma'};D')
&=J(\pi_{\gamma};D),
\end{align*}
and hence $J^*_\Lscr(D)=J^*_\Lscr(D').$ 
Thus if $\exists Q\in \Qscr $ such that $  J(Q; D)<J^*_\Lscr(D)$ then with $Q',D'$ as above, we have $J(Q'; D')< J^*_\Lscr( D'),$ whereby $\Cscr(M\t,N\t)$ admits a quantum advantage. 
We have established the proposition.

\subsubsection{Proof of proposition \ref{prop:permute}}
($(1)\Leftrightarrow (3)$) We will show that $(3)$ implies $(1)$. But since $\Rsf^2=\Ibf$, applying the same to result with $M,N$ replaced by $\Rsf M,\Rsf N$, we can conclude that $(1) \implies (3)$.
 For $D= (M, N, \Pbb, \Uscr_A, \Uscr_B, \chi) \in \Cscr(M, N)$, consider a $D'=(\mathsf{R}M, \mathsf{R}N, \Pbb, \Uscr_A', \Uscr_B, \chi) \in \Cscr(\mathsf{R}M, \mathsf{R}N)$ and let its cost function be denoted $\ell'$. We think of $\Uscr_A $ as a column vector $(u_A^0,u_A^1)$ and let $\Uscr_A' = \Rsf \Uscr_A$. 
Then for any strategy $Q \in \Pscr(\Uscr|\Xi)$, we have, 
\begin{align*}
	&J(Q; D')\\
 &=\sum_{\xi, u_B}\Pbb(\xi)\sum_{u_A'\in \Uscr_A'}\ell'(u_A^{\prime}, u_B, \xi_W)Q(u_A^{\prime}, u_B|\xi_A, \xi_B)\\
&=\sum_{\xi, u_B}\Pbb(\xi)\sum_{u_A \in \Uscr_A}\ell(u_A, u_B, \xi_W)Q(u_A, u_B|\xi_A, \xi_B) \\
&=J(Q; D),
\end{align*}
where $\xi = (\xi_A,\xi_B,\xi_W).$ It is easy to also see that $J_\Pi^*(D)=J_\Pi^*(D').$ Thus, 
$$J(Q, D)-J_\Pi^*(D)<0\implies J(Q, D')-J_\Pi^*(D)<0$$
and we have shown that $(3) \implies (1)$, and thus $(1) \Leftrightarrow (3).$

The equivalence between $(2)$ and $(3)$ can be shown in a similar manner, thereby completing the proof.

\subsubsection{Proof of proposition \ref{prop:exchange}}
It suffices to show $(1) \implies (2)$ for arbitrary $M,N$, following which the reverse claim will follow. 
Let $D:=(M, N, \Pbb, \Uscr_A, \Uscr_B, \chi)$ and $Q$ $:=(\rho_{AB}$,$\{ P_{u_A}^{(A)}(\xi_A)\}_{u_A,\xi_A}$,$\{ {P}^{(B)}_{u_B}(\xi_B)\}_{u_B, \xi_B})\in \Qscr$. 
Consider a problem $D':=(N, M, \Pbb', \Uscr_A, \Uscr_B,1/\chi) \in \Cscr(N,M)$ where
\begin{equation*}
\Pbb'(\xi_A, \xi_B, 0)=\Pbb(\xi_A, \xi_B, 1), \quad \Pbb'(\xi_A, \xi_B, 1)=\Pbb(\xi_A, \xi_B, 0)
\end{equation*}
Notice that the cost function $\ell'(.)$ of the instance $D'$ is related to $\ell(.)$ of $D$ as 
\begin{align*}
\ell'(\cdot, \cdot, 0)&\equiv \frac{1}{\chi}\ell(\cdot, \cdot, 1),\quad 
\ell'(\cdot,\cdot, 1)\equiv \frac{1}{\chi}\ell(\cdot,\cdot, 0)
\end{align*}
Thus, denoting $\xi=(\xi_A,\xi_B,\xi_W),$ we have for any $Q\in \Pscr(\Uscr|\Xi)$
\begin{align}
J(Q;D')&=\sum_{u, \xi}\Pbb'(\xi)\ell'(u_A, u_B, \xi_W)Q(u_A, u_B|\xi_A, \xi_B)\nonumber\\
&=\frac{1}{\chi}\sum_{u, \xi} \Pbb(\xi)\ell(u_A, u_B, \xi_W)Q(u_A, u_B|\xi_A, \xi_B)\nonumber\\
&=\frac{1}{\chi} J(Q;D) \label{eq:Eequivalence}
\end{align}
In particular taking $Q \in \Pi$, we get that $J^*_\Lscr(D) = \frac{1}{\chi}J^*_\Lscr(D')$. Thus if $\exists Q\in \Qscr$ such that  $J(Q;D)-J^*_\Lscr(D)<0$, then for this $Q$, we have
$J(Q; D')-J^*_\Lscr(D')=\frac{1}{\chi}(J(Q;D)-J^*_\Lscr(D))<0.$
This establishes the proposition.

\subsection{Optimal Deterministic Policy for the $\half$-CAC instance in Section \ref{sec:1/2cacpoc}}

\begin{lemma}\label{lem:1/2cacdemodet}
For the instance $D=(M_c, N_c, \Pbb, \Uscr_A, \Uscr_B, \chi)\in \half$-CAC with $\Pbb$ given by \eqref{eq:1/2cacprior} and $\chi=2$, specified in Section \ref{sec:1/2cacpoc}, an optimal deterministic policy and the corresponding cost is given by
$$\gamma^*_A\equiv u_A^0, \gamma_B^*\equiv u_B^1, \quad J(\pi_{\gamma^*};D)=-6/5.$$
\end{lemma}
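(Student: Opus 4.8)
The plan is to reduce the minimization over the sixteen deterministic policies to maximizing a sum of four $\{0,1\}$-valued terms, and then to certify that this maximum equals the value attained by $\gamma^*$.

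First I would record the cost explicitly. From \eqref{eq:costmat}, \eqref{eq:12mc} and $\chi = 2$, the instance $D$ has $\ell(u_A^i, u_B^j, 0) = -\I{i=0,\ j=0}$ and $\ell(u_A^i, u_B^j, 1) = -2\,\I{i \oplus j = 1}$. Since the prior \eqref{eq:1/2cacprior} is supported only on $(0,0,1)$, $(0,1,1)$, $(1,0,1)$ (mass $0.2$ each) and $(1,1,0)$ (mass $0.4$), these four tuples are the only ones that enter $J(\pi_\gamma; D)$ in \eqref{eq:expcost}. Next I would parametrize a generic deterministic policy by its action indices, writing $\gamma_A(\xi_A) = u_A^{a_{\xi_A}}$ and $\gamma_B(\xi_B) = u_B^{b_{\xi_B}}$ with $(a_0, a_1, b_0, b_1) \in \{0,1\}^4$. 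Substituting the four support tuples gives
\begin{equation*}
	J(\pi_\gamma; D) = -\tfrac{2}{5}\, S,
\end{equation*}
where
\begin{equation*}
	S := \I{a_0 \neq b_0} + \I{a_0 \neq b_1} + \I{a_1 \neq b_0} + \I{a_1 = 0,\ b_1 = 0},
\end{equation*}
the first three terms arising from the $\xi_W = 1$ tuples and the last from $(1,1,0)$. Minimizing $J$ over $\Pi$ is therefore equivalent to maximizing the integer $S$ over $\{0,1\}^4$.

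The crux is the bound $S \le 3$, which I would establish by contradiction. If $S = 4$, all four indicators equal $1$; the fourth forces $a_1 = 0$ and $b_1 = 0$; then $\I{a_0 \neq b_1} = 1$ gives $a_0 = 1$ and $\I{a_1 \neq b_0} = 1$ gives $b_0 = 1$, so $\I{a_0 \neq b_0} = \I{1 \neq 1} = 0$, contradicting the first indicator. Hence $S \le 3$ for every deterministic policy. Finally, $\gamma^*$ corresponds to $a_0 = a_1 = 0$ and $b_0 = b_1 = 1$, for which the first three indicators are $1$ and the fourth is $0$, so $S = 3$ and $J(\pi_{\gamma^*}; D) = -6/5$.

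Combining the two observations, $\gamma^*$ attains the maximal value of $S$ and is therefore optimal over $\Pi$; since $\Lscr = \conv(\Pi)$ and $J$ is linear in $Q$, we also obtain $J^*_\Lscr(D) = J^*_\Pi(D) = -6/5$. The argument is entirely finite and presents no real obstacle; the only point requiring care is the bookkeeping that converts the hypothetical $S = 4$ into the stated contradiction, which is what rules out any policy beating $\gamma^*$.
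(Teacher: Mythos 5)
Your proof is correct, but it takes a genuinely different route from the paper's. The paper proves the lemma by exhaustive enumeration: it tabulates all sixteen deterministic policies together with their costs and reads off that $\gamma^*_A\equiv u_A^0$, $\gamma^*_B\equiv u_B^1$ attains the minimum $-6/5$. You instead collapse the sixteen-fold minimization into maximizing the integer $S=\I{a_0 \neq b_0}+\I{a_0 \neq b_1}+\I{a_1 \neq b_0}+\I{a_1 = 0,\ b_1 = 0}$ over $(a_0,a_1,b_0,b_1)\in\{0,1\}^4$, prove $S\le 3$ by a short contradiction, and check that $\gamma^*$ attains $S=3$; every step checks out, and your reduction reproduces the paper's table exactly (the five tuples with $S=3$ are precisely the five rows of the table showing cost $-6/5$, the tuples with $S=2$ give the $-4/5$ rows, and so on). What each approach buys: the paper's table is immediate, fully transparent verification that also exhibits the whole cost landscape, including all ties for the optimum; your argument is structurally more informative --- the impossibility of $S=4$ says that the three anti-coordination demands imposed by the $\xi_W=1$ atoms of the prior cannot all be met simultaneously with the coordination demand of the $(1,1,0)$ atom, which is exactly the coordination dilemma of this instance in miniature --- and it would extend without rework to any prior with the same support but different positive weights, where enumeration would have to be redone. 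Your closing remark that $J^*_\Lscr(D)=J^*_\Pi(D)$ by linearity of $J$ and $\Lscr=\conv(\Pi)$ is consistent with the paper, which records this fact right after defining $J^*_S$, and that is the form in which the lemma is actually invoked in Section~\ref{sec:1/2cacpoc}.
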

\begin{proof}
We show this by directly enumerating all sixteen policies and their cost in the table below. 
\begin{center}
    \begin{tabular}{c|c|c|c|c}
         $\gamma_A(0)$ & $\gamma_A(1)$ & $\gamma_B(0)$ & $\gamma_B(1)$ & $J(\pi_{\gamma}; D)$  \\
        \hline
          $u_A^0$&$u_A^0$&$u_B^0$&$u_B^0$& $-2/5$\\
          $u_A^1$&$u_A^0$&$u_B^0$&$u_B^0$& $-6/5$\\
          $u_A^0$&$u_A^1$&$u_B^0$&$u_B^0$& $-2/5$\\
          $u_A^1$&$u_A^1$&$u_B^0$&$u_B^0$& $-6/5$\\
          $u_A^0$&$u_A^0$&$u_B^1$&$u_B^0$& $-6/5$\\
          $u_A^1$&$u_A^0$&$u_B^1$&$u_B^0$& $-6/5$\\
          $u_A^0$&$u_A^1$&$u_B^1$&$u_B^0$& $-2/5$\\
          $u_A^1$&$u_A^1$&$u_B^1$&$u_B^0$& $-2/5$\\
          $u_A^0$&$u_A^0$&$u_B^0$&$u_B^1$& $-2/5$\\
          $u_A^0$&$u_A^1$&$u_B^0$&$u_B^1$& $-4/5$\\
          $u_A^1$&$u_A^0$&$u_B^0$&$u_B^1$& $-2/5$\\
          $u_A^1$&$u_A^1$&$u_B^0$&$u_B^1$& $-4/5$\\
          $\boxed{u_A^0}$&$\boxed{u_A^0}$&$\boxed{u_B^1}$&$\boxed{u_B^1}$& $\boxed{-6/5}$\\
          $u_A^0$&$u_A^1$&$u_B^1$&$u_B^1$& $-4/5$\\
          $u_A^1$&$u_A^0$&$u_B^1$&$u_B^1$& $-2/5$\\
          $u_A^1$&$u_A^1$&$u_B^1$&$u_B^1$& $0$\\    
         \hline
    \end{tabular}
\end{center}
It is evident that the boxed policy is an optimal policy.
\end{proof}

\begin{biography}[{\includegraphics[width=1in,height=1in]{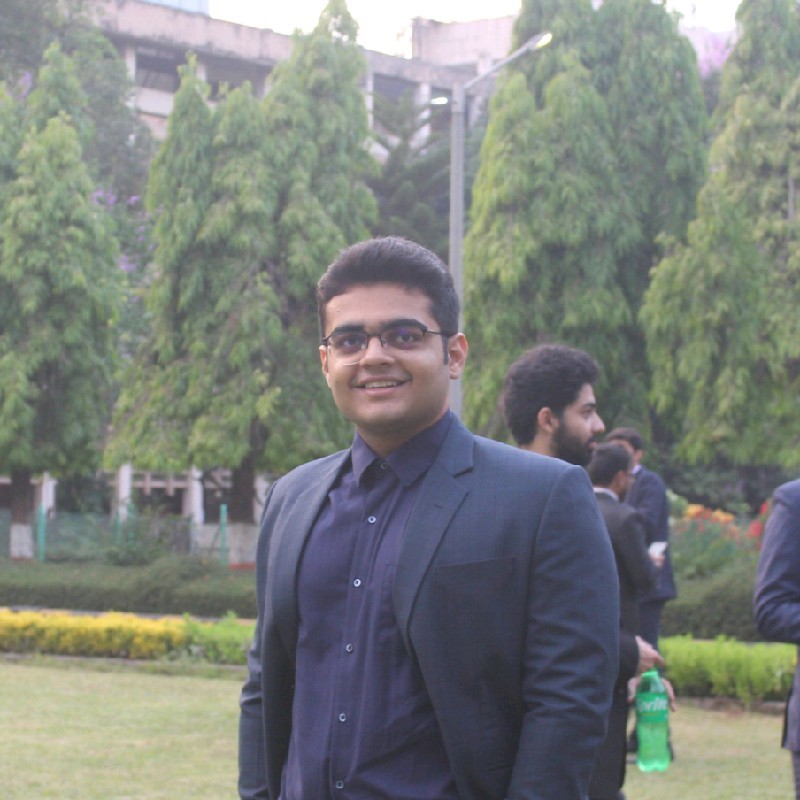}}]{Shashank Aniruddha Deshpande}
 was born in Buldhana, India, in 2000. He is an undergraduate senior in the Department of Physics, and, the Department of Systems and Control Engineering at IIT Bombay, India. His research interests are in the control and optimization of stochastic and networked systems.
\end{biography}

\begin{biography}[{\includegraphics[width=1in, keepaspectratio]{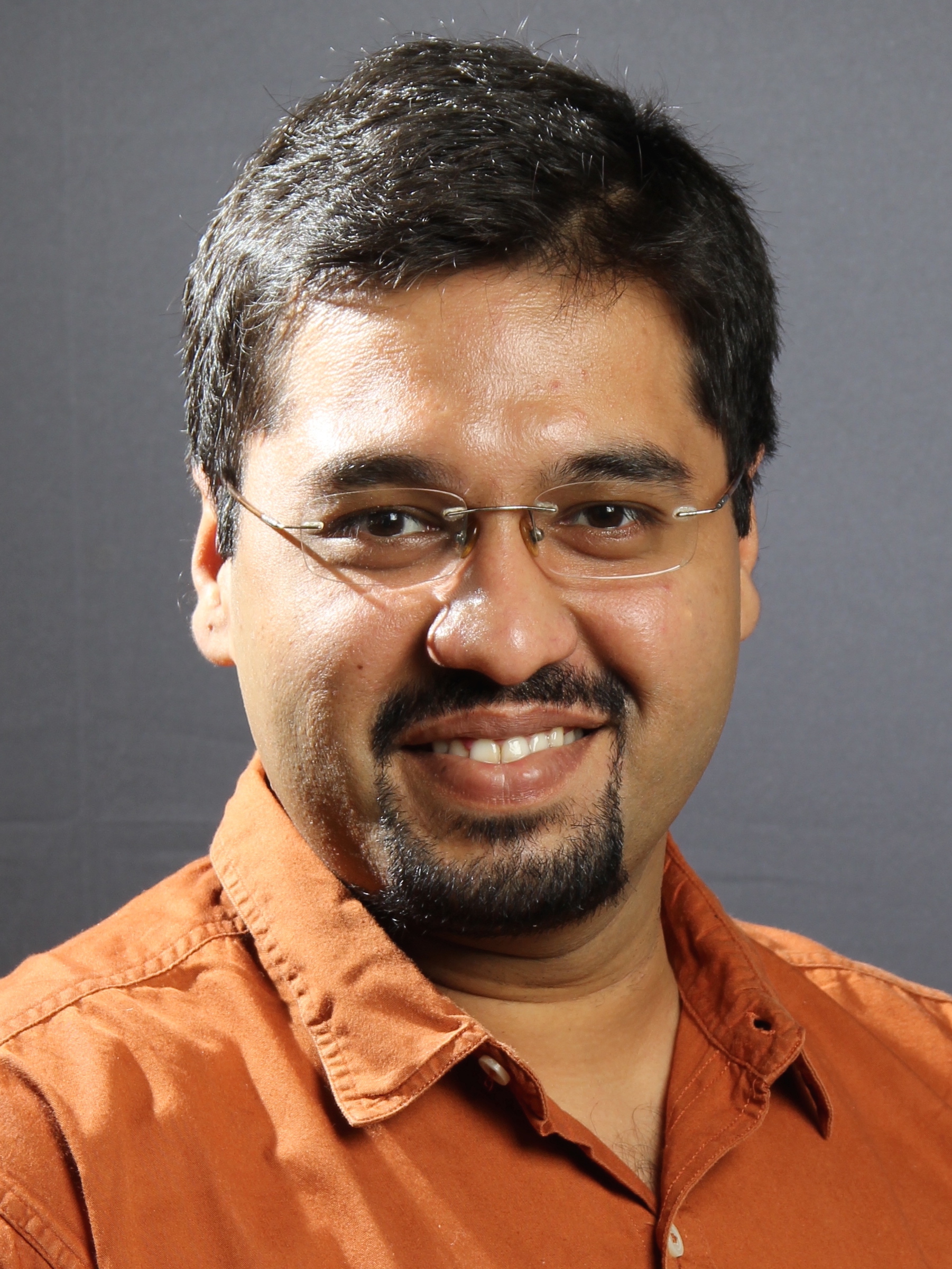}}]{Ankur A. Kulkarni}
	is the Kelkar Family Chair Associate Professor with the Systems and Control Engineering group at the Indian Institute of Technology Bombay (IITB). He received his B.Tech. from IITB in 2006, followed by M.S. in 2008 and Ph.D. in 2010, both from the University of Illinois at Urbana-Champaign (UIUC). From 2010-2012 he was a post-doctoral researcher at the Coordinated Science Laboratory at UIUC. He was an Associate (from 2015--2018) of the Indian Academy of Sciences, a recipient of the INSPIRE Faculty Award of the Dept of Science and Technology Govt of India in 2013. He has won
	several best paper awards at conferences, the Excellence in Teaching Award in 2018 at IITB, and the William A. Chittenden Award in 2008 at UIUC.
	He has been a consultant to the Securities and Exchange Board of India (SEBI), the HDFC Life Insurance Company, Kotak Mahindra Bank Ltd and Bank of Baroda. He presently serves on the IT-Project Advisory Board of SEBI, as Research Advisor to the Tata Consultancy Services, and as Program Chair of the Indian Control Conference. He has been a visitor to MIT in USA, University of Cambridge in UK, NUS in Singapore, IISc in Bangalore and KTH in Sweden.
\end{biography}

\end{document}